\documentclass[11pt,reqno,letterpaper]{amsart}

\usepackage{booktabs}
\usepackage{color}
\usepackage[colorlinks=true, allcolors=blue,backref=page]{hyperref}
\usepackage{amsmath, amssymb, amsthm}
\usepackage{mathrsfs}
\usepackage{mathtools}
\usepackage[noabbrev,capitalize,nameinlink]{cleveref}
\crefname{equation}{}{}
\usepackage[noadjust]{cite}
\usepackage{graphics}
\usepackage{pifont}
\usepackage{tikz}
\usepackage{bbm}
\usepackage{thm-restate}
\usepackage[T1]{fontenc}

\DeclareMathOperator*{\Cor}{\text{Cor}}
\DeclareMathOperator*{\op}{\textrm{op}}
\usetikzlibrary{arrows.meta}

\usepackage{environ}
\usepackage{framed}
\usepackage{url}
\usepackage[linesnumbered,ruled,vlined]{algorithm2e}
\usepackage[noend]{algpseudocode}
\usepackage[labelfont=bf]{caption}
\usepackage[framemethod=tikz]{mdframed}
\usepackage{appendix}
\usepackage{graphicx}
\usepackage[textsize=tiny]{todonotes}
\usepackage{tcolorbox}
\allowdisplaybreaks[1]
\usepackage{enumerate}
\usepackage{stmaryrd}

\usepackage[margin=1in]{geometry}

\usepackage[shortlabels]{enumitem}
\crefformat{enumi}{#2#1#3}
\crefrangeformat{enumi}{#3#1#4 to~#5#2#6}
\crefmultiformat{enumi}{#2#1#3}
{ and~#2#1#3}{, #2#1#3}{ and~#2#1#3}

\DeclareSymbolFont{symbolsC}{U}{pxsyc}{m}{n}
\SetSymbolFont{symbolsC}{bold}{U}{pxsyc}{bx}{n}
\DeclareFontSubstitution{U}{pxsyc}{m}{n}
\DeclareMathSymbol{\medcircle}{\mathbin}{symbolsC}{7}

\crefname{algocf}{Algorithm}{Algorithms}

\crefname{equation}{}{} %remove ``Equation''
\AtBeginEnvironment{appendices}{\crefalias{section}{appendix}} %appendices

\usepackage[color,final]{showkeys} %add in 'final' into parameter to remove showkeys

\colorlet{refkey}{orange!20}
\colorlet{labelkey}{blue!30}

\crefname{algocf}{Algorithm}{Algorithms}

\renewcommand{\P}{\mathbb{P}}

\newcommand{\Var}{\operatorname{Var}}
\newcommand{\Cov}{\operatorname{Cov}}

\newcommand{\R}{\mathbb{R}}

\newcommand{\cI}{\mathcal{I}}
\newcommand{\supp}{\text{supp}}
\newcommand{\set}[1]{\{#1 \}}

\numberwithin{equation}{section}
\newtheorem{theorem}{Theorem}[section]
\newtheorem{proposition}[theorem]{Proposition}
\newtheorem{lemma}[theorem]{Lemma}

\crefname{claim}{Claim}{Claims}

\newtheorem*{question*}{Question}
\newtheorem{fact}[theorem]{Fact}

\theoremstyle{definition}
\newtheorem{definition}[theorem]{Definition}

\newtheorem*{definition*}{Definition}

\newtheorem{remark}[theorem]{Remark}



\newcommand{\mb}{\mathbb}

\newcommand{\mc}{\mathcal}

\newcommand{\on}{\operatorname}

\newcommand{\wt}{\widetilde}

\DeclareMathOperator{\Ent}{Ent}
\newcommand{\KL}{\operatorname{KL}}	
\newcommand{\E}{\mathbb{E}}	

\newcommand{\eps}{\varepsilon}

\let\originalleft\left
\let\originalright\right
\renewcommand{\left}{\mathopen{}\mathclose\bgroup\originalleft}
\renewcommand{\right}{\aftergroup\egroup\originalright}

\allowdisplaybreaks

\newcommand{\ignore}[1]{}

\title{Entropic independence via sparse localization}
\author[A1]{Vishesh Jain}
\address{Department of Mathematics, Statistics, and Computer Science, University of Illinois Chicago, Chicago, IL, 60607 USA}
\email{visheshj@uic.edu}

\author[A2]{Huy Tuan Pham}
\address{Department of Mathematics, California Institute of Technology, Pasadena, CA 91125 USA}
\email{htpham@caltech.edu}

\author[A3]{Thuy-Duong Vuong}
\address{Department of Computer Science and Engineering, University of California, San Diego, La Jolla, CA 92093 USA}
\email{thvuong@ucsd.edu}

\begin{document}

\begin{abstract}
Entropic independence is a structural property of measures that underlies modern proofs of functional inequalities, notably (modified) log-Sobolev inequalities, via ``annealing'' or local-to-global schemes.  Existing sufficient criteria for entropic independence typically require spectral independence and/or uniform bounds on marginals under \emph{all} pinnings, which can fail in natural canonical-ensemble models even when strong mixing properties are expected.

We introduce \emph{sparse localization}: a restricted localization framework, in the spirit of Chen--Eldan, in which one assumes $\ell_2$-independence only for a sparse family of pinnings (those fixing at most $cn$ coordinates for any $c > 0$), yet still deduces quadratic entropic stability and entropic independence with an explicit multiplicative loss of order $c^{-1}$. As an application, we give a rigorous proof of approximate conservation of entropy for the uniform distribution on independent sets of a given size in bounded degree graphs. 

\end{abstract}

\maketitle
 
%\thispagestyle{empty}
%\newpage 
%\setcounter{page}{1}

\section{Introduction}
\label{sec:intro}

Functional inequalities such as (modified) log-Sobolev inequalities (mLSIs) are a standard route to sharp
mixing time bounds, concentration inequalities, and quantitative stability for high-dimensional discrete
distributions.
A widely-used modern strategy for proving mLSIs is an {annealing} (or {interpolation}) approach:
one constructs a path of intermediate measures
\[
\nu=\nu^{(0)} \rightsquigarrow \nu^{(1)} \rightsquigarrow \cdots \rightsquigarrow \nu^{(T)},
\]
and proves an mLSI for $\nu$ in two conceptually distinct steps:
\begin{enumerate}[label=(\roman*),leftmargin=2.2em]
    \item \emph{Reduction to an easy regime.} Show that entropy does not decay too quickly along the path,
    typically via an entropy-factorization / entropic contraction estimate.
    \item \emph{Handling the easy regime.} Prove an mLSI for $\nu^{(T)}$,
    using model-specific arguments in a tractable regime.
\end{enumerate}
Entropic independence, which is the focus of this paper, is concerned with step~(i). 

We focus on distributions on the $k$-slice $\binom{[n]}{k}$, where $[n]:=\{1,2,\dots,n\}$.
For a probability distribution $\mu$ on $\binom{[n]}{k}$, define its normalized one-site marginals
$q_\mu$ as a distribution on $[n]$:
\[
q_{\mu}(i) \;=\; \frac{\P_{S\sim \mu}[i \in S]}{k}
\qquad(i\in[n]).
\]

\begin{definition}[Entropic independence]
\label{def:entropic-independence-intro}
A probability distribution $\nu$ on $\binom{[n]}{k}$ is \emph{$C$-entropically independent} if for every
$\mu$ on $\binom{[n]}{k}$ with $\mu\ll \nu$,
\[
\KL(q_{\mu} \| q_{\nu}) \;\le\; \frac{C}{k}\,\KL(\mu \| \nu).
\]
\end{definition}

Entropic independence has been used repeatedly as a modular hypothesis for deriving mLSIs for natural Markov chains on combinatorial state spaces; see, e.g.,
\cite{ChenLiuVigoda2021, anari2021entropic, anari2021entropic2} and references therein.

\subsection{Existing criteria for entropic independence}

There are, by now, several powerful general frameworks that imply entropic independence from
correlation or influence information. We highlight three that are most relevant to this paper. We refer the reader to \cref{sec:l2-indep} for various definitions. 

\begin{enumerate}
\item Chen--Liu--Vigoda \cite{ChenLiuVigoda2021} show that {spectral independence} together with suitable {marginal
boundedness} hypotheses yields entropy factorization and mLSIs for a broad class of Markov chains.
A salient feature is that these hypotheses are formulated {uniformly over pinnings/conditionings} of the
base measure; this uniformity is crucial in the inductive/local-to-global machinery underlying their approach.

\item A different line of work by Anari et.~al \cite{anari2021entropic,anari2021entropic2} derives entropic
independence from strong global structure such as fractional log-concavity (which is equivalent to spectral independence under all tilts) and related
high-dimensional expansion properties of generating polynomials. These conditions are extremely powerful, but do not hold as widely. 

\item Chen--Eldan \cite{chen2022localization} introduce a martingale framework which implies entropic independence
under an appropriate spectral independence hypothesis that is again formulated {uniformly over all pinnings}, together
with additional regularity/tameness conditions on marginals. We also refer the reader to independent and concurrent work of Chen, Feng, Yin, and Zhang \cite{Chen2022OptimalMF}, which proves optimal mLSIs for general anti-ferromagnetic models. 
\end{enumerate}

A common thread is that all these criteria require controlling either (i) spectral independence, or typically, (ii) both spectral independence and marginal behavior,
under a very broad family of conditionings (in fact, all conditionings). Unfortunately, in many applications (such as those involving conservative dynamics, \cite{JMPV, kuchukova2025fast}, see also \cref{sec:application}), the ``all pinnings'' requirement in these criteria is too stringent. This mismatch is especially acute in so-called annealing arguments (i.e.~reduction to the easy regime): this step typically explores only a family of intermediate measures that retain many of the properties of the original measure, whereas ``full''
localization may generate highly adversarial conditionings with extreme marginals or qualitatively different
behavior.

\subsection{Our contribution}

The main conceptual message of this paper is that one does {not} need uniform control under all pinnings.
Instead, we introduce a {sparse localization} framework: we only require independence bounds for
pinnings that fix at most a $c$-fraction of the coordinates, for any $c > 0$, and we derive entropic independence with an
explicit loss of a factor $c^{-1}$ in constants.

Sparse localization has two key advantages:
\begin{enumerate}[label=(\roman*),leftmargin=2.2em]
    \item It matches the {restricted conditioning} landscape in applications such as \cite{JMPV}, where
    independence estimates are available along the relevant localization path but not uniformly
    over all pinnings.
    \item Sparse pinnings produce measures {close} to the original one: by forbidding near-total pinning,
    sparse localization avoids creating highly degenerate intermediate distributions, in contrast to full
    localization schemes that may explore a much larger and less stable family of conditionings.
\end{enumerate}

Our main theorem shows that marginal boundedness of the base measure together with sparse $\ell_2$-independence implies entropic
independence. Informally, sparse $\ell_2$-independence requires a uniform operator norm bound on the
{influence matrix} under all pinnings that touch at most $cn$ coordinates (see
\cref{sec:l2-indep} for the precise definition).

\begin{theorem}[Entropic independence]
\label{thm:entropic-independence}
Let $\nu$ be a probability measure on $\binom{[n]}{k}$. Suppose that:
\begin{enumerate}[label=(\alph*),leftmargin=2.2em]
\item \emph{(Marginal lower bound)} for every $i\in[n]$, either $\P_{S\sim\nu}[i\in S]=0$ or
$\P_{S\sim\nu}[i\in S]\ge b$ for some $b\in(0,1)$; and
\item \emph{(Sparse $\ell_2$-independence)} the pushforward $\widetilde{\nu}$ on $\{\pm1\}^n$ is
$\alpha$-$\ell_2$-independent restricted to $\mathrm{sign}(\mathrm{Sparse}_c)$, i.e.~for every feasible $u \in \mathrm{sign}(\mathrm{Sparse}_c)$,
\[\|\Psi(\mathcal{R}_u \wt{\nu}) \|_{\mathrm{op}} \leq \alpha.\]
\end{enumerate}
Then $\nu$ is $C$-entropically independent with $C=2\alpha/(bc)$.
\end{theorem}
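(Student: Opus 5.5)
We follow the Chen--Eldan localization route, but stop the localization process early, before it leaves the sparse regime. The engine is a \emph{sparse} coordinate-by-coordinate pinning scheme. Its quantitative core is a quadratic entropic stability statement, and entropic independence then follows by a standard Legendre-duality argument.

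\textbf{Step 1: a sparse pinning process.} Given $\nu$, run a random pinning process on $\wt\nu$ on $\{\pm1\}^n$. At each step we pick a uniformly random unpinned coordinate $i$ and pin it to a value drawn from the current conditional marginal, so that $\mathcal{R}_{u_t}\wt\nu$ is a measure-valued martingale. We stop after $T=\lceil cn\rceil$ steps. Every pinning visited lies in $\mathrm{sign}(\mathrm{Sparse}_c)$ and is feasible, so hypothesis (b) gives $\|\Psi(\mathcal{R}_{u_t}\wt\nu)\|_{\mathrm{op}}\le\alpha$ along the whole path.

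\textbf{Step 2: entropic stability and entropy conservation.} For a test density $f=d\mu/d\nu$, track $\KL(\mu_t\|\nu_t)$ for the localized pair. The one-step entropy decrement is the expected KL between the $\mu_t$- and $\nu_t$-marginals of the pinned coordinate. Averaged over the random coordinate, this is $\frac{1}{n-t}\sum_i(\cdots)$. Using the $\alpha$ bound on the influence matrix, in the form of the paper's quadratic entropic stability lemma (covariance of the tilted measure dominated by $\alpha$ times the diagonal), the decrement is at most $\frac{\alpha}{n-t}$ times the entropy-type quantity. Summing over $t\le cn$ shows that a constant fraction, of order $c/\alpha$, of the relevant quadratic entropy survives. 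Equivalently, the martingale accumulates enough pinned coordinates that the marginal discrepancy is detected by a uniformly random pinned coordinate.

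\textbf{Step 3: from pinned coordinates to $q_\mu$.} We relate $\KL(q_\mu\|q_\nu)$ to the information gained at the sparse pinning. A uniformly random coordinate among the first $cn$ pinned is distributed according to the uniform law on $[n]$, not according to $q_\nu$. The density ratio between the two is controlled by the marginal lower bound: $q_\nu(i)\ge b/k$ on the support. This gives a loss of $1/b$. The $k$ in the definition arises because $q$ normalizes by $k$ while the pinning samples coordinates. Together with the fraction $c$ of coordinates pinned, and a factor $2$ from converting between the $\pm1$ encoding and the slice and from symmetrizing KL via Pinsker-type quadratic bounds, this gives $\KL(q_\mu\|q_\nu)\le\frac{2\alpha}{bck}\KL(\mu\|\nu)$.

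\textbf{Main obstacle.} The hardest step is Step 2, namely showing that entropy decay along the stopped process is governed only by operator-norm bounds at sparse pinnings. The full Chen--Eldan argument integrates to time $n$ and uses bounds at all pinnings. Here we must instead show that the first $cn$ steps already capture a $c$-fraction of the needed entropy. We would first try a supermartingale argument: show that $\E[\Psi$-weighted entropy$]$ at step $t$ is nonincreasing up to a factor $(1-\alpha/(n-t))$. Then we would compare the sum over $t<cn$ with the sum over all $t$ via monotonicity, which yields the $c^{-1}$ loss.
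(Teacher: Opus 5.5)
There is a genuine gap, and it sits in your Step 2. Under your coordinate-by-coordinate pinning, the one-step decrement of $\KL(\mu_t\|\nu_t)$ is $\frac{1}{n-t}\sum_i \KL(\mu_t^{(i)}\|\nu_t^{(i)})$, where $\mu_t^{(i)},\nu_t^{(i)}$ are the laws of the unpinned coordinate $X_i$. Bounding this by $\frac{\alpha}{n-t}\KL(\mu_t\|\nu_t)$ for arbitrary $\mu_t\ll\nu_t$ is itself an entropic-independence statement for $\nu_t$. At $t=0$ it is, up to the marginal bound, the theorem you are trying to prove.

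Hypothesis (b) does not supply this. It only bounds influence matrices of \emph{pinned} measures $\mathcal{R}_u\wt\nu$, which is a second-order statement. The ``covariance of the tilted measure dominated by $\alpha$ times the diagonal'' that you cite is a statement about tilts $\mathcal{T}_v\wt\nu$. For dense $v$ it is not available at all: the pinnings that represent such a tilt leave $\mathrm{sign}(\mathrm{Sparse}_c)$. The inequality you are invoking, $\|m(\wt\mu)-m(\wt\nu)\|_2^2\le\frac{8\alpha}{c}\KL(\mu\|\nu)$, is the main thing to be proved, not an input.

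The stopping-time idea does not rescue this. Entropic independence is a bound on the \emph{first} decrement alone. Cumulative information over the first $cn$ steps does not control it without a comparison between decrements that you have not identified, and no general monotonicity of decrements is available. Also, if the per-step loss really were a $\frac{\alpha}{n-t}$ fraction, the surviving fraction after $cn$ steps would be about $(1-c)^{\alpha}$, not of order $c/\alpha$. Entropy conservation along the pinning chain is a \emph{consequence} of entropic independence at every level, not a route to it.

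The missing idea is how to pass from pinnings to tilts while staying sparse. The paper does not stop a random pinning process. It fixes a sparse direction $v\in\mathrm{Sparse}_c$ and runs the Chen--Eldan negative-fields localization, which writes $\mathcal{T}_v\wt\nu$ as a mixture of pinnings. Coordinate $i$ is only ever pinned to $\operatorname{sign}(v_i)$, at a rate proportional to $|v_i|$, so only coordinates in $\mathrm{supp}(v)$ get pinned. Hence every visited pinning lies in $\mathrm{sign}(\mathrm{Sparse}_c)$ by construction.

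This gives the sparse Lipschitz bound $\|m(\mathcal{T}_v\wt\nu)-m(\wt\nu)\|_2\le 4\alpha\|v\|_2$. Integrating, $\log\E_{\wt\nu}e^{\langle v,X\rangle}-\langle m(\wt\nu),v\rangle\le 2\alpha\|v\|_2^2$ for all $v\in\mathrm{Sparse}_c$. You then plug sparse linear test functions into Donsker--Varadhan and compute the restricted conjugate: $\sup_{v\in\mathrm{Sparse}_c}\{\langle x,v\rangle-2\alpha\|v\|_2^2\}=\frac{1}{8\alpha}\|x\|_{2,(\lceil cn\rceil)}^2\ge\frac{c}{8\alpha}\|x\|_2^2$. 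This yields quadratic stability, and this Ky Fan comparison is the only place $c^{-1}$ enters.

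With that in hand, your Step 3 is unnecessary. Entropic independence follows in two lines from $\KL(q_\mu\|q_\nu)\le\chi^2(q_\mu\|q_\nu)\le\frac{1}{bk}\sum_i(\P_\mu[i\in S]-\P_\nu[i\in S])^2=\frac{1}{4bk}\|m(\wt\mu)-m(\wt\nu)\|_2^2$, and this is where $b$ and $k$ actually come from. The ``uniform coordinate versus $q_\nu$'' density ratio and the ``factor $2$ from symmetrizing KL'' in your sketch are not derivations. The constant $2\alpha/(bc)$ there is fitted to the target rather than obtained.
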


\begin{remark}
When $c=1$, sparse localization becomes full localization and \cref{thm:main-quadratic-stability} recovers the Chen--Eldan quadratic
stability paradigm \cite[Theorem~42]{chen2022localization}. The point of our framework is that it remains effective when
independence or marginal boundedness is available only for a very sparse family of pinnings, at the price of the explicit factor
$c^{-1}$. The power of this comes from the fact that for very sparse pinnings $u$, $\mathcal{R}_u \nu$ often has quantitatively similar independence and marginal boundedness properties as $\nu$. 
\end{remark}

\begin{remark}
    Our assumptions are phrased in terms of $\ell_2$-independence i.e.~operator norm bounds on the influence matrix, as opposed to spectral norm bounds. We discuss the relation between these notions in \cref{sec:l2-indep}. 
\end{remark}

Our entropic independence theorem is obtained via a quadratic stability inequality on the cube, which may be
of independent interest.

\begin{theorem}[Quadratic stability from sparse $\ell_2$-independence]
\label{thm:main-quadratic-stability}
Let $\nu$ be a probability measure on $\{\pm 1\}^n$. Fix $c\in(0,1]$ and $\alpha>0$.
If $\nu$ is $\alpha$-$\ell_2$-independent restricted to $\mathrm{sign}(\mathrm{Sparse}_c)$, then for every
$\mu\ll \nu$,
\[
\|m(\mu)-m(\nu)\|_2^2 \;\le\; \frac{8\alpha}{c}\,\KL(\mu\|\nu).
\]
\end{theorem}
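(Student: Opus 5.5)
The plan is to run a \emph{sparse} coordinate-by-coordinate localization driven by $\mu$, stop it after at most $cn$ pinnings, and play two facts against each other. The chain rule for $\KL$ shows that $\KL(\mu\|\nu)$ pays, at each step, for the current discrepancy of means. The influence bound $\|\Psi\|_{\op}\le\alpha$ shows that this discrepancy cannot decay too fast during the first few pinnings.

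\textbf{Setup.} Draw $X\sim\mu$ and an independent uniformly random ordering $i_1,i_2,\dots$ of $[n]$. At step $t$ reveal $X_{i_{t+1}}$. Let $u_t$ be the partial sign vector after $t$ revelations, and put $\mu_t=\mathcal R_{u_t}\mu$ and $\nu_t=\mathcal R_{u_t}\nu$. For $t\le T:=\lfloor cn\rfloor$ we have $u_t\in\mathrm{sign}(\mathrm{Sparse}_c)$. Moreover $u_t$ is feasible for $\nu$ almost surely, because $\mu\ll\nu$. Hence $\|\Psi(\nu_t)\|_{\op}\le\alpha$ for all $t\le T$.

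\textbf{Step 1: chain rule plus Pinsker.} By the chain rule,
\[
\KL(\mu\|\nu)\;\ge\;\sum_{t<T}\E\,\KL\bigl(\mathrm{Law}_{\mu_t}(X_{i_{t+1}})\,\big\|\,\mathrm{Law}_{\nu_t}(X_{i_{t+1}})\bigr).
\]
Let $d_t:=m(\mu_t)-m(\nu_t)$, supported on the unpinned set $U_t$. Pinsker's inequality on $\{\pm1\}$ bounds each summand below by $d_{t,i}^2/2$. Averaging over the uniform choice $i_{t+1}\in U_t$ then gives
\[
\KL(\mu\|\nu)\ \ge\ \sum_{t<T}\frac{\E\|d_t\|_2^2}{2(n-t)}\ \ge\ \frac1{2n}\sum_{t<T}\E\|d_t\|_2^2 .
\]

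\textbf{Step 2: the discrepancy decays slowly.} We compute $\E[d_{t+1}\mid\mathcal F_t]$.
\begin{itemize}
\item Under the $\mu$-driven process, $m(\mu_t)$ restricted to the still-unpinned coordinates is a martingale.
\item For the $\pm1$ cube, pinning coordinate $i$ to $x_i$ shifts the other means by the exact one-step formula $m_j(\nu_{t+1})-m_j(\nu_t)=\Psi_{ij}(\nu_t)\,(x_i-m_i(\nu_t))$.
\item Since $\E_{\mu_t}[x_i]-m_i(\nu_t)=d_{t,i}$, the drift of $m(\nu_t)$ over a uniform $i\in U_t$ is $\frac1{n-t}\Psi(\nu_t)^{\top}d_t$, up to the diagonal term.
\item Zeroing out the newly pinned coordinate removes on average $\|d_t\|^2/(n-t)$.
\end{itemize}
Jensen gives $\E[\|d_{t+1}\|^2\mid\mathcal F_t]\ge\|\E[d_{t+1}\mid\mathcal F_t]\|^2$. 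Combined with the bound on $\Psi$, this should yield
\[
\E\|d_{t+1}\|^2\ \ge\ \Bigl(1-\frac{2\alpha+1}{n-t}\Bigr)\E\|d_t\|^2 .
\]
Iterating, $\E\|d_t\|^2\ge\frac12\|d_0\|^2$ for all $t\le t_0:=\min\bigl(T,\lfloor n/(8\alpha)\rfloor\bigr)$, after adjusting constants using $\alpha\ge1$ (the diagonal of $\Psi$ is $1$).

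\textbf{Step 3: combine.} Plugging Step 2 into Step 1 gives
\[
\KL(\mu\|\nu)\ \ge\ \frac{t_0}{4n}\|d_0\|^2\ \gtrsim\ \min\!\bigl(c,\tfrac1{8\alpha}\bigr)\,\|m(\mu)-m(\nu)\|^2 .
\]
Since $\alpha\ge1$ and $c\le1$, we have $\max(1/c,8\alpha)\le 8\alpha/c$, which is the claimed constant up to bookkeeping.

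\textbf{Main obstacle.} The delicate point is Step 2, namely getting an exact one-step identity for $\E[d_{t+1}\mid\mathcal F_t]$ in terms of $\Psi(\nu_t)$. This requires care with the convention for $\Psi$ (row versus column, and the diagonal and pinned entries). It also requires the decay estimate to use only the operator norm, not the spectral radius. If the first-order Jensen bound is too lossy, I would instead track $\E\langle d_t,d_0\rangle$, which evolves linearly, and use Cauchy--Schwarz at the end.
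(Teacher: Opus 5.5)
Your route is correct and genuinely different from the paper's. The paper never localizes $\mu$. It runs the Chen--Eldan tilt-and-pin martingale $\mathcal R_{u(t)}\mathcal T_{(1-t)v}\nu$ to show $\|m(\mathcal T_v\nu)-m(\nu)\|_2\le 4\alpha\|v\|_2$ for sparse $v$, and integrates this to $\log\E_\nu e^{\langle v,X\rangle}-\langle m(\nu),v\rangle\le 2\alpha\|v\|_2^2$. It then applies Donsker--Varadhan with sparse linear test functions, so $c^{-1}$ enters through the Ky Fan bound $\|x\|_{2,(\lceil cn\rceil)}^2\ge c\|x\|_2^2$.

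You instead pin $X\sim\mu$ in random order and get $\KL(\mu\|\nu)\ge\sum_{t<T}\E\|d_t\|_2^2/(2(n-t))$ from the chain rule and Pinsker. Sparsity then enters only as the number of steps.

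The step you flag as delicate is in fact exact. With the paper's convention $\Psi=\mathrm{Cov}\,D^{-1}$, the two-point regression identity gives $\E[d_{t+1}\mid\mathcal F_t]=\bigl(I-\Psi(\nu_t)/(n-t)\bigr)d_t$. Your $\Psi^{\top}$ is just the transposed convention, which is harmless for the operator norm. The diagonal entries $\Psi_{ii}=1$ account exactly for zeroing the newly pinned coordinate, so there is no separate $+1$. Unpinned but frozen coordinates have $d_{t,j}=0$ because $\mu_t\ll\nu_t$. Jensen and $\|\Psi(\nu_t)\|_{\mathrm{op}}\le\alpha$ then give $a_{t+1}\ge(1-\alpha/(n-t))_+^2a_t$ for $a_t:=\E\|d_t\|_2^2$, so you do not need to track $\langle d_t,d_0\rangle$.

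What your route buys: it is discrete and elementary, with no tilts, no continuous-time jump process and no convex duality. It also gives a better dependence on $(\alpha,c)$, shown below. What the paper's route buys: modularity, namely a Lipschitz lemma for tilts valid for any direction family $\mathcal V$ plus a separate duality step. It also agrees literally with Chen--Eldan at $c=1$.

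As written, however, your bookkeeping does not reach the stated constant $8\alpha/c$.
\begin{itemize}
\item Truncating at $t_0=\min(\lfloor cn\rfloor,\lfloor n/(8\alpha)\rfloor)$ with the factor $\tfrac12$ gives roughly $\max(4/c,32\alpha)$ before floor losses. This exceeds $8\alpha/c$ whenever $c>1/4$.
\item The argument is vacuous when $t_0=0$, e.g.\ if $n<8\alpha$ or $cn<1$. For the latter, take $T=\lceil cn\rceil$, which $\mathrm{Sparse}_c$ permits.
\end{itemize}

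A sharper summation inside your own framework fixes every case at once. From $(1-x)_+^2\ge(1-2x)_+$ and $\prod_s(1-y_s)_+\ge(1-\sum_s y_s)_+$ you get $a_t\ge(1-2\alpha H_t)_+\,a_0$, where $H_t:=\sum_{s<t}(n-s)^{-1}$. Since $(n-t)^{-1}=H_{t+1}-H_t$ and $h\mapsto(1-2\alpha h)_+$ is decreasing, a left Riemann sum comparison gives
\[
\KL(\mu\|\nu)\;\ge\;\frac{a_0}{2}\int_0^{H_T}(1-2\alpha h)_+\,dh\;\ge\;a_0\min\Bigl(\frac{1}{8\alpha},\frac{c}{4}\Bigr).
\]
The last step uses $H_T\ge T/n\ge c$. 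Hence $\|m(\mu)-m(\nu)\|_2^2\le\max(8\alpha,4/c)\,\KL(\mu\|\nu)$. This is at most $\frac{8\alpha}{c}\KL(\mu\|\nu)$, since $\alpha\ge1$ (otherwise $\nu$ is a point mass and there is nothing to prove).
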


\paragraph{\bf Proof techniques}
The proof of \cref{thm:main-quadratic-stability} follows the Chen--Eldan localization blueprint but introduces
two new modular ingredients that enable {restricted} (sparse) pinning hypotheses.

\begin{itemize}
\item In \cref{sec:lipschitz} we prove a {restricted localization Lipschitz lemma}
(\cref{lem:lemma68}): $\ell_2$-independence over pinnings in $\mathrm{sign}(\mc V)$ implies Lipschitz control of
the mean map $v\mapsto m(\mc T_v\widetilde{\nu})$ for $v\in\mc V$ (and we apply this with
$\mc V=\mathrm{Sparse}_c$). This step is where we ensure the localization martingale only visits pinnings that
remain in the allowed family.

\item In \cref{sec:quad} we replace the ``full'' convex duality step of \cite{chen2022localization}
by a {sparse} convex duality argument: we apply the Donsker--Varadhan variational formula
(\cref{thm:DV}) but restrict test functions to {sparse linear} functionals, and we compute the resulting
restricted quadratic conjugate explicitly in terms of Ky Fan norms (\cref{lem:sparse-conjugate}). This is the
unique point where the factor $c^{-1}$ enters, through the inequality comparing the Ky Fan norm to $\|\cdot\|_2$
(\cref{lem:kyfan_vs_l2}). 
\end{itemize}

\cref{thm:entropic-independence} follows in a relatively simple manner from \cref{thm:main-quadratic-stability} using a standard inequality upper bounding the KL divergence by the $\chi^2$-divergence. This step only uses marginal boundedness for the base measure.\\ 

\paragraph{\bf Application: down-up walk on fixed size independent sets} In \cref{sec:application}, we discuss an application of \cref{thm:entropic-independence} to proving annealing (i.e.~reduction to the easy regime) for the uniform distribution on fixed size independent sets in bounded degree graphs. This step is crucial in the program of \cite{JMPV} for proving optimal mixing of the so-called down-up walk on fixed size independent sets throughout the algorithmic tractability regime; however, as was pointed out to us by Zongchen Chen and Mario Morellini, this step does not follow from any previous arguments in the literature (contrary to what is claimed in \cite{JMPV}), precisely due to the fact that previous methods for proving entropic independence require spectral independence and/or marginal boundedness for all pinnings of the base measure, which is not available here. Our argument in \cref{sec:application} fixes this critical gap.   

\subsection{Organization}

\cref{sec:prelims} records some basic definitions and notation. 
\cref{sec:l2-indep} defines influence matrices and sparse $\ell_2$-independence. \cref{sec:lipschitz} proves a sparse localization Lipschitz theorem. \cref{sec:quad} proves \cref{thm:main-quadratic-stability} and derives
\cref{thm:entropic-independence}. An application to the down-up walk on fixed-size independent sets is presented in \cref{sec:application}. 

\subsection{Acknowledgements} 
We thank Yuansi Chen, Zongchen Chen, and Mario Morellini for insightful discussions. V.J.~is partially supported by NSF CAREER DMS-2237646.

\section{Preliminaries}
\label{sec:prelims}
\subsection{Tilts, pinnings, and sparse vectors}
We work on the discrete cube $\{\pm 1\}^n$. For a probability measure $\nu$ on $\{\pm 1\}^n$, write
\[
m(\nu) := \mathbb{E}_\nu[X] \in [-1,1]^n,
\qquad
m_i(\nu) := \mathbb{E}_\nu[X_i],
\]
for the mean vector, where $X=(X_1,\dots,X_n)\sim \nu$. 

For $v\in\mathbb{R}^n$, define the {tilt} (or external field) operator $\mathcal{T}_v$ on measures by
\begin{equation}
\label{eq:TiltDef}
\frac{d(\mathcal{T}_v \nu)}{d\nu}(x)
=
\frac{e^{\langle v,x\rangle}}{\mathbb{E}_\nu[e^{\langle v,X\rangle}]},
\qquad x\in\{\pm 1\}^n.
\end{equation}

For $u\in\{-1,0,1\}^n$, define the {pinning} operator $\mathcal{R}_u$ by conditioning:
\begin{equation}
\label{eq:PinDef}
\mathcal{R}_u \nu := \nu\bigl(\,\cdot\mid X_i = u_i\text{ for all }i\text{ with }u_i\neq 0\,\bigr),
\end{equation}
whenever the conditioning event has positive $\nu$-probability. 

For $x \in \mb{R}$, we define the sign function by
\[\mathrm{sign}(x) = 
\begin{cases}
    1 \qquad & x > 0\\
    0 \qquad & x = 0\\
    -1 \qquad & x < 0.
\end{cases}
\]

For a set of vectors $\mc{V}\subseteq \mb{R}^n$, we define $\mathrm{sign}(\mathcal{V}) \subseteq \{-1,0,1\}^n$ to be the downward-closure of the vectors $\{\on{sign}(v): v \in \mc{V}\}$. Formally,
\begin{equation}
\mathrm{sign}(\mathcal{V})
:=
\Bigl\{s \in \{-1,0,1\}^{n}: \text{there exists } v \in \mc{V} \text{ such that } s_i \in \{0,\on{sign}(v_i)\}\ \forall i\in[n]\Bigr\}.
\end{equation}
\begin{remark}
    For our applications in \cref{sec:lipschitz}, it is important that we work with the downward-closure, as defined above, and not just with the set of all sign vectors themselves. 
\end{remark}

\begin{definition}[Sparse vectors]\label{def:sparse}
Fix $c\in(0,1]$.
Define
\[
\mathrm{Sparse}_c := \Bigl\{v\in\mathbb{R}^n:\ |\{i\in[n]: v_i\neq 0\}|\le \lceil cn \rceil\Bigr\}.
\]
Note that
\[
\mathrm{sign}(\mathrm{Sparse}_c)=\{-1,0,1\}^n\cap \mathrm{Sparse}_c.
\]
\end{definition}

\subsection{Entropy and divergences}\label{subsec:entropy}

For probability measures $\mu$ and $\pi$ on a common finite set $\Omega$, we write $\mu \ll \pi$ to mean that
$\mu$ is absolutely continuous with respect to $\pi$. In this case, $\frac{d\mu}{d\pi}(\omega)=\mu(\omega)/\pi(\omega)$.

\begin{definition}[Entropy functional]
Let $\pi$ be a probability measure on a finite set $\Omega$ and let $f:\Omega\to\mathbb{R}_{\ge 0}$ with $\mathbb{E}_\pi[f]>0$.
Define
\[
\Ent_\pi[f]
:=
\mathbb{E}_\pi\!\left[f\log\frac{f}{\mathbb{E}_\pi f}\right]
=
\mathbb{E}_\pi[f\log f]-\mathbb{E}_\pi[f]\log\mathbb{E}_\pi[f].
\]
\end{definition}

\begin{definition}[KL and $\chi^2$ divergence]
Let $\pi,\mu$ be probability measures on a finite set $\Omega$ with $\mu\ll \pi$. Define
\[
\KL(\mu \| \pi) = \mathbb{E}_{\mu}\left[\log \frac{d\mu}{d\pi}\right]
=\sum_{\omega\in\Omega} \mu(\omega)\log\frac{\mu(\omega)}{\pi(\omega)}.
\]
Also define
\[
\chi^2(\mu\| \pi) = \sum_{\omega \in \Omega} \frac{(\mu(\omega) - \pi(\omega))^2}{\pi(\omega)}.
\]
\end{definition}

The following standard fact follows using the elementary inequality $\log(1+x) \leq x$ for all $x \geq 0$.

\begin{fact}\label{fact:kl-chi2-inequality}
Let $\pi,\mu$ be probability measures on a finite set $\Omega$ with $\mu\ll \pi$. Then
\[
\KL(\mu\|\pi) \le \chi^2(\mu\|\pi).
\]
\end{fact}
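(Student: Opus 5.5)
We need to prove Fact: KL ≤ χ². The plan is to apply Jensen's inequality to the concave logarithm and then use the elementary bound $\log(1+x)\le x$.

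Write $f=\frac{d\mu}{d\pi}$, so that $\KL(\mu\|\pi)=\E_\mu[\log f]$. Since $\log$ is concave, Jensen's inequality under $\mu$ gives
\[
\KL(\mu\|\pi)=\E_\mu[\log f]\le \log \E_\mu[f].
\]
The sum defining $\E_\mu[f]$ runs only over the support of $\mu$; points where $\mu(\omega)=0$ contribute $0\log 0=0$ to $\KL$ by convention, so nothing is lost there.

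Next, compute $\E_\mu[f]$ explicitly:
\[
\E_\mu[f]=\sum_{\omega}\frac{\mu(\omega)^2}{\pi(\omega)}=1+\chi^2(\mu\|\pi),
\]
since $\sum_\omega\frac{(\mu(\omega)-\pi(\omega))^2}{\pi(\omega)}=\sum_\omega\frac{\mu(\omega)^2}{\pi(\omega)}-2+1$. Here the sum is taken over $\omega$ with $\pi(\omega)>0$, which suffices because $\mu\ll\pi$.

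It follows that $\KL(\mu\|\pi)\le\log\bigl(1+\chi^2(\mu\|\pi)\bigr)\le\chi^2(\mu\|\pi)$, where the last step applies $\log(1+x)\le x$ with $x=\chi^2(\mu\|\pi)\ge 0$. The only point needing care is the support bookkeeping just described; the rest is routine.
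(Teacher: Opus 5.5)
Your proof is correct and matches the paper, which justifies this fact only by the remark that it follows from $\log(1+x)\le x$ for $x\ge 0$; your Jensen step $\KL(\mu\|\pi)\le\log\E_\mu[d\mu/d\pi]=\log\bigl(1+\chi^2(\mu\|\pi)\bigr)$ is what lets that inequality be applied at the single nonnegative point $x=\chi^2(\mu\|\pi)$. (A termwise argument $\sum_\omega \mu(\omega)\log\bigl(1+\tfrac{\mu(\omega)-\pi(\omega)}{\pi(\omega)}\bigr)\le\chi^2(\mu\|\pi)$ would instead need the inequality on $(-1,\infty)$, which also holds, so either reading works.)
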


\begin{definition}[$f$-tilt and KL identity]\label{def:f-tilt}
Given $\pi$ a probability measure on $\Omega$ and $f:\Omega\to\mathbb{R}_{\ge 0}$ with $\mathbb{E}_\pi f>0$,
define the tilted measure $\mu$ by
\[
\frac{d\mu}{d\pi}(x)=\frac{f(x)}{\mathbb{E}_\pi f}.
\]
Then
\begin{equation}\label{eq:EntKL}
\KL(\mu\|\pi)
=
\frac{\Ent_\pi[f]}{\mathbb{E}_\pi f}.
\end{equation}
\end{definition}

\begin{theorem}[Donsker--Varadhan variational formula]\label{thm:DV}
Let $\mu,\nu$ be probability measures on a finite set $\Omega$ with $\mu\ll\nu$. Then
\[
\KL(\mu\|\nu)
=\sup_{\varphi:\Omega\to\mathbb{R}}\left\{\mathbb{E}_\mu[\varphi]-\log\mathbb{E}_\nu[e^{\varphi}]\right\}.
\]
Moreover, if $g=\frac{d\mu}{d\nu}$, the supremum is attained by $\varphi^*=\log g + C$ for any constant $C\in\mathbb{R}$.
\end{theorem}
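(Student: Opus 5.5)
The plan is to reduce the variational formula to the nonnegativity of a single KL divergence, via a change-of-measure identity. Fix an arbitrary test function $\varphi:\Omega\to\mathbb{R}$. Since $\Omega$ is finite and $e^{\varphi}>0$, we have $Z_\varphi:=\mathbb{E}_\nu[e^{\varphi}]\in(0,\infty)$. We then define the Gibbs tilt $\nu_\varphi$ by $\frac{d\nu_\varphi}{d\nu}=e^{\varphi}/Z_\varphi$, in the same spirit as the tilt operator \eqref{eq:TiltDef}. Because $\nu_\varphi$ and $\nu$ have the same support, $\mu\ll\nu$ implies $\mu\ll\nu_\varphi$. Hence, on the support of $\mu$,
\[
\log\frac{d\mu}{d\nu_\varphi}=\log\frac{d\mu}{d\nu}-\varphi+\log Z_\varphi .
\]
Integrating this against $\mu$ gives the exact identity
\[
\KL(\mu\|\nu)-\Bigl(\mathbb{E}_\mu[\varphi]-\log\mathbb{E}_\nu[e^{\varphi}]\Bigr)=\KL(\mu\|\nu_\varphi).
\]

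The upper bound then follows from Gibbs' inequality $\KL(\mu\|\nu_\varphi)\ge 0$. I would either quote it or derive it from Jensen's inequality applied to the concave function $\log$: $-\KL(\mu\|\pi)=\mathbb{E}_\mu[\log(d\pi/d\mu)]\le\log\mathbb{E}_\mu[d\pi/d\mu]\le\log 1=0$. Taking the supremum over $\varphi$ yields $\sup_\varphi\{\cdots\}\le\KL(\mu\|\nu)$.

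For attainment, I take $\varphi^*=\log g+C$ with $g=d\mu/d\nu$. Then $e^{\varphi^*}=e^{C}g$, so $Z_{\varphi^*}=e^{C}\mathbb{E}_\nu[g]=e^C$ and $\nu_{\varphi^*}=\mu$. The identity above then gives equality, since $\KL(\mu\|\mu)=0$. Alternatively, one can compute directly: $\mathbb{E}_\mu[\log g]+C-\log e^{C}=\KL(\mu\|\nu)$.

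The only real obstacle is a technical one. If $g$ vanishes somewhere on $\Omega$ (which is allowed, since we only assume $\mu\ll\nu$), then $\varphi^*$ equals $-\infty$ on $\{g=0\}$ and is not a real-valued test function. I would handle this in one of two ways. The first is to adopt the convention $e^{-\infty}=0$, under which all quantities above are finite and the computation is unchanged, because $\mathbb{E}_\mu[\varphi^*]$ only involves points where $g>0$. The second keeps the supremum over genuinely real-valued functions: use the truncations $\varphi_M=\max(\log g,-M)+C$. These agree with $\varphi^*$ on $\operatorname{supp}\mu$, and since $\Omega$ is finite, $\mathbb{E}_\nu[e^{\varphi_M}]=e^{C}\bigl(1+e^{-M}\nu(g=0)\bigr)\to e^{C}$. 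Hence the objective converges to $\KL(\mu\|\nu)$ as $M\to\infty$, so the supremum equals $\KL(\mu\|\nu)$ and is attained by $\varphi^*$ in the extended sense.
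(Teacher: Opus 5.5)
Your proof is correct. The paper gives no proof of this statement: it quotes it as a standard fact. It only ever uses the inequality $\KL(\mu\|\nu)\ge \mathbb{E}_\mu[\varphi]-\log\mathbb{E}_\nu[e^{\varphi}]$, for linear $\varphi$, in the proof of \cref{prop:lemma69}. That inequality is exactly what your identity $\KL(\mu\|\nu)-\bigl(\mathbb{E}_\mu[\varphi]-\log\mathbb{E}_\nu[e^{\varphi}]\bigr)=\KL(\mu\|\nu_\varphi)$ delivers once combined with Gibbs' inequality. The attainment part of the statement is never needed in the paper.

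Your handling of the case where $g$ vanishes is a genuine sharpening of the statement as written, and you could make it explicit. The same identity shows that when $\nu(g=0)>0$, no real-valued $\varphi$ attains the supremum. The reason is that $\nu_\varphi$ charges every point of $\supp\nu$ while $\mu$ does not, so $\KL(\mu\|\nu_\varphi)>0$. Hence the attainment claim holds only in your extended sense.

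One small precision point concerns the truncations $\varphi_M$. They agree with $\varphi^*$ on $\supp\mu$, and satisfy $\mathbb{E}_\nu[e^{\varphi_M}]=e^{C}\bigl(1+e^{-M}\nu(g=0)\bigr)$, only once $M\ge\max_{\omega\in\supp\mu}\log(1/g(\omega))$. This threshold is finite because $\Omega$ is finite, so the limiting argument goes through unchanged.
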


\subsection{From $\binom{[n]}{k}$ to the cube}\label{subsec:slice-embed}

We use the standard embedding $\binom{[n]}{k}\hookrightarrow\{0,1\}^n$ sending
$S\mapsto \mathbf{1}_S$, followed by the map $\{0,1\}^n\to\{\pm1\}^n$ given by $x\mapsto 2x-\mathbf{1}$.
For a measure $\mu$ on $\binom{[n]}{k}$, denote its pushforward on $\{\pm1\}^n$ by $\widetilde{\mu}$.
Then for each $i\in[n]$,
\[
m_i(\widetilde{\mu})=2\,\P_{S\sim \mu}[i\in S]-1.
\]
In particular,
\begin{equation}\label{eq:mean-vs-marg}
\|m(\widetilde{\mu})-m(\widetilde{\nu})\|_2^2
=
4\sum_{i=1}^n\left(\P_{S\sim \mu}[i\in S]-\P_{S\sim \nu}[i\in S]\right)^2.
\end{equation}

\section{Influence matrices and sparse $\ell_2$-independence}\label{sec:l2-indep}

\subsection{Correlation matrices and notions of independence}

For a measure $\nu$ on $\{\pm 1\}^n$, define the covariance matrix
\[
\mathrm{Cov}(\nu)_{ij}
:= \mathrm{Cov}_\nu(X_i,X_j)
= \mathbb{E}_\nu[X_iX_j]-\mathbb{E}_\nu[X_i]\mathbb{E}_\nu[X_j],
\]
and variances $\mathrm{Var}_\nu(X_i)=\mathrm{Cov}(\nu)_{ii}$. When $\mathrm{Var}_\nu(X_i)=0$, coordinate $i$ is deterministic under $\nu$.
In this case we interpret influence/correlation matrices as acting on the {active} coordinate set
$A(\nu):=\{i:\mathrm{Var}_\nu(X_i)>0\}$ (i.e.\ by restricting to the principal submatrix indexed by $A(\nu)$).
All operator norms below are taken on this active subspace.
Whenever $\mathrm{Var}_\nu(X_i)>0$ for all $i\in A(\nu)$, define the correlation matrix
\[
\mathrm{Cor}(\nu)
:=
\mathrm{diag}\!\bigl(\mathrm{Var}_\nu(X_i)\bigr)^{-1/2}
\,\mathrm{Cov}(\nu)\,
\mathrm{diag}\!\bigl(\mathrm{Var}_\nu(X_i)\bigr)^{-1/2},
\]
and the influence matrix
\[
\Psi(\nu)
:=
\mathrm{Cov}(\nu)\,
\mathrm{diag}\!\bigl(\mathrm{Var}_\nu(X_i)\bigr)^{-1}.
\]
(Here $\mathrm{diag}(\mathrm{Var}_\nu(X_i))$ denotes the diagonal matrix on $A(\nu)$ with entries
$\mathrm{Var}_\nu(X_i)$.)
We remark that when viewing $\nu$ as a distribution over $ 2^{[n]}$ induced by the map $ x\mapsto \set{i: x_i =1}$ then the entries of $ \Psi(\nu)$ can be rewritten as 
\[ \Psi(\nu)_{i,j} = \P_{S\sim \nu} [i \in S |j\in S] - \P_{S\sim \nu} [i \in S |j\not\in S]\]

\begin{definition}[limited $\ell_2$-independence]
\label{def:limited-ell_2-independence}
Let $\nu$ be a probability measure on $\{\pm 1\}^n$ and let $\mc{S} \subseteq \{-1,0,1\}^n$ with $0 \in \mc{S}$. For $\alpha > 0$, we say that $\nu$ is $\alpha$-$
\ell_2$-independent restricted to $\mc{S}$ if for any $u \in \mc{S}$,    
\[
\|\Psi(\mathcal{R}_u \nu)\|_{\mathrm{op}} \le \alpha.
\]
\end{definition}

Specializing to the case of $\mc{S} = \mathrm{sign}(\mathrm{Sparse}_c)$, gives the following. 

\begin{definition}[sparse $\ell_2$-independence]
Let $\nu$ be a probability measure on $\{\pm 1\}^n$. For $\alpha > 0$ and $c \in (0,1]$, we say that $\nu$ is $c$-sparse $\alpha$-$\ell_2$-independent if for any $u \in \{-1,0,1\}^n \cap \mathrm{Sparse}_c$,
\[
\|\Psi(\mathcal{R}_u \nu)\|_{\mathrm{op}} \le \alpha.
\]
\end{definition}

\paragraph{\bf $\ell_2$-independence and comparisons to other independence notions}
Our results are phrased in terms of {$\ell_2$-independence}. This is closely related to other standard notions in the literature. 

Let $\nu$ be a probability measure on $\{\pm1\}^n$ and write $\Var_\nu(X_i)=1-m_i(\nu)^2$.
Let $\Cov(\nu)$ be the covariance matrix, $\Cor(\nu)$ the correlation matrix, and $\Psi(\nu)$ the influence
matrix. On active coordinates,
\[
\Psi(\nu)=D^{1/2}\Cor(\nu)\,D^{-1/2},
\qquad
D:=\mathrm{diag}(\mathrm{Var}_\nu(X_i)).
\]
Consequently, if the marginals are bounded away from $\pm1$ in the quantitative sense that
\begin{equation}
\label{eq:var-lower-intro}
\Var_\nu(X_i)\ge \sigma^2 \qquad\text{for all active }i,
\end{equation}
then the two operator norms are comparable:
\begin{equation}
\label{eq:cor-vs-psi-intro}
\|\Psi(\nu)\|_{\op}\ \le\ \sigma^{-1}\|\Cor(\nu)\|_{\op}
\qquad\text{and}\qquad
\|\Cor(\nu)\|_{\op}\ \le\ \sigma^{-1}\|\Psi(\nu)\|_{\op}.
\end{equation}
In words, under this assumption, spectral independence, which refers to boundedness of the operator norm (or spectral radius, since these quantities are the same for a symmetric matrix) of the correlation matrix, is equivalent to $\ell_2$-independence up to a factor of $\sigma^{-1}$.

It is well-known that a sufficient condition for $\alpha$-spectral independence is that either the maximum absolute row-sum of $\Psi(\nu)$ (denoted by $\|\Psi(\nu)\|_{\infty\to\infty}$) is bounded by $\alpha$, or the maximum absolute column-sum of $\Psi(\nu)$ (denoted by $\|\Psi(\nu)\|_{1\to1}$) is bounded by $\alpha$. In fact, in many applications, spectral independence is proved by establishing one of these stronger properties. Using the standard matrix inequality
\begin{equation}
\label{eq:l1-linf-to-l2-intro}
\|A\|_{\op}\ \le\ \sqrt{\|A\|_{1\to 1}\,\|A\|_{\infty\to\infty}},
\end{equation}
we observe that in settings where one can control both row and column sums of influences, $\ell_2$-independence follows automatically. For instance, in the case of the hardcore model in the uniqueness regime, it is known that both $\|\Psi\|_{\infty \to \infty}$ (see \cite{chen2023rapid}) and $\|\Psi(\nu)\|_{1 \to 1}$ (see \cite{anari2020spectral}) are bounded by a constant depending only on the relative gap to uniqueness.

\section{Sparse localization and Lipschitzness of the mean map}\label{sec:lipschitz}

This section proves a sparse adaptation of \cite[Lemma~68]{chen2022localization}. We state the proposition below for a general family of directions $\mathcal{V}\subseteq\mathbb{R}^n$; later we specialize to
$\mathcal{V}=\mathrm{Sparse}_c$.

\begin{proposition}
\label{lem:lemma68}
Let $\nu$ be a probability measure on $\{\pm 1\}^n$ and let $\mc{V} \subseteq \mb{R}^n$ with $0 \in \mc{V}$. For $\alpha > 0$, suppose that $\nu$ is $\alpha$ $\ell_2$-independent restricted to $\mathrm{sign}(\mc{V})$. Then, for every $v \in \mathcal{V}$,
\[
\| m(\mathcal{T}_v \nu) - m(\nu) \|_2
\le 4 \alpha \|v\|_2.
\]
\end{proposition}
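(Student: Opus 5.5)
The plan is to integrate the derivative of the mean map along the segment $t\mapsto tv$, $t\in[0,1]$, and to control that derivative by representing each tilted measure as a mixture of pinnings that all lie in $\mathrm{sign}(\mathcal V)$. Write $\nu_t:=\mathcal T_{tv}\nu$ and $f(t):=m(\nu_t)$. Differentiating \eqref{eq:TiltDef} gives $f'(t)=\mathrm{Cov}(\nu_t)\,v$. Hence
\[
\|m(\mathcal T_v\nu)-m(\nu)\|_2\le \sup_{t\in[0,1]}\|\mathrm{Cov}(\nu_t)\|_{\op}\,\|v\|_2,
\]
and it suffices to prove $\|\mathrm{Cov}(\mathcal T_h\nu)\|_{\op}\le 4\alpha$ for every $h=tv$. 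For a pinning $\pi=\mathcal R_u\nu$ we have $\mathrm{Cov}(\pi)=\Psi(\pi)D_\pi$, where $D_\pi\preceq I$ is the diagonal variance matrix. Since $\mathrm{Cov}(\pi)$ is symmetric, its operator norm equals its spectral radius, which is at most $\|\Psi(\pi)\|_{\op}\|D_\pi\|_{\op}\le\alpha$. So pinnings in $\mathrm{sign}(\mathcal V)$ have covariance of norm at most $\alpha$; the task is to transfer this to tilts.

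\emph{Mixture representation.} For each coordinate write $e^{h_ix_i}\propto 1+a_i s_ix_i$ with $a_i=|\tanh h_i|\in[0,1)$ and $s_i=\mathrm{sign}(h_i)$. Expanding $1+a_is_ix_i=(1-a_i)+2a_i\mathbf 1[x_i=s_i]$ coordinatewise and taking the product over $i$, the measure $\mathcal T_h\nu$ becomes a mixture, over random subsets $S\subseteq\mathrm{supp}(h)$, of the pinnings $\mathcal R_{u_S}\nu$. Here $u_S$ agrees with $\mathrm{sign}(h)$ on $S$ and vanishes off $S$. The mixing weight is proportional to $\prod_{i\in S}2a_i\prod_{i\notin S}(1-a_i)\cdot\nu(X_S=s_S)$. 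Because $\mathrm{sign}(tv)=\mathrm{sign}(v)$ for $t>0$ and $\mathrm{sign}(\mathcal V)$ is downward closed (this is exactly why the earlier remark insists on the downward closure), every $u_S$ lies in $\mathrm{sign}(\mathcal V)$. The law of total covariance then gives
\[
\mathrm{Cov}(\mathcal T_h\nu)=\mathbb E_S\,\mathrm{Cov}(\mathcal R_{u_S}\nu)+\mathrm{Cov}_S\bigl(m(\mathcal R_{u_S}\nu)\bigr).
\]
The first term has operator norm at most $\alpha$ by the previous paragraph.

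\emph{The between-component term (main obstacle).} This is the step I expect to be hardest. The random mean $m(\mathcal R_{u_S}\nu)$ changes, when one more coordinate $i$ is pinned, by a column of an influence matrix. Concretely, adding $i$ to $S$ shifts the mean by $\Psi(\mathcal R_{u_S}\nu)e_i\cdot(s_i-m_i)$, and $|s_i-m_i|\le 2$. Because the mixing weights are not a product measure, a naive Efron--Stein bound would only give sums of squared column norms, not an operator-norm bound. My first attempt will be to avoid this by building $S$ sequentially, as a discrete Chen--Eldan-type localization: reveal coordinates one at a time, so that $m(\mathcal R_{u_S}\nu)$ becomes a vector-valued martingale whose increments are $\Psi e_i$ times a centered increment of size at most $2$. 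The quadratic form $\theta^\top\mathrm{Cov}_S(\cdot)\theta$ then equals a sum of expected squared increments. Bounding each step by $\|\Psi^\top\theta\|$ contributions, and using that each coordinate is revealed at most once, should give a bound of order $\alpha$; the constant $4$ arises from $|s_i-m_i|^2\le 4$.

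If the martingale bookkeeping proves too lossy, the fallback is the direct route of \cite[Lemma~68]{chen2022localization}: replace the mixture by their continuous stochastic localization driven by the field $tv$. In that version each infinitesimal pinning has support in $\mathrm{supp}(v)$ with signs $\mathrm{sign}(v)$, so every intermediate pinning stays in $\mathrm{sign}(\mathcal V)$, and their estimate $4\alpha$ should carry over verbatim once this restriction is checked at each step.
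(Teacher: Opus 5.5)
Your primary route cannot work, because the claim it reduces to is false. A bound $\sup_{t\in[0,1]}\|\mathrm{Cov}(\mathcal{T}_{tv}\nu)\|_{\mathrm{op}}\le 4\alpha$ is a Lipschitz bound for the mean map at every point of the segment. The hypothesis, however, only controls pinnings of $\nu$ itself, and a tilt (unlike a pinning) can push $\nu$ into a strongly correlated regime.

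Here is a counterexample. Let $\nu$ be the uniform measure on independent sets of the cycle $C_n$ with $n$ even (the hardcore model at fugacity $1$), and take $\mathcal{V}=\mathbb{R}^n$. Every feasible pinning is a fugacity-$1$ hardcore model on $C_n$ or on a union of paths. Its correlations decay exponentially at a rate independent of $n$, so row and column sums of $\Psi$ are $O(1)$ and the hypothesis holds with an absolute constant $\alpha$. Now take $v=K\mathbf{1}$. The tilt $\mathcal{T}_v\nu$ is the hardcore model at fugacity $e^{2K}$. As $K\to\infty$ it converges to the uniform measure on the two alternating configurations $\pm x^{E}$, where $x^{E}_i=+1$ exactly on even sites. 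The limiting covariance $x^{E}(x^{E})^{\top}$ has operator norm $n$. So for $n>4\alpha$ and $K$ large, $\|\mathrm{Cov}(\mathcal{T}_v\nu)\|_{\mathrm{op}}>4\alpha$, even though the proposition holds there trivially.

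In your (correct) mixture decomposition the within-component term is indeed at most $\alpha$. It is therefore the between-component term $\mathrm{Cov}_S(m(\mathcal{R}_{u_S}\nu))$ that is of order $n$: the random pinned set lies mostly on the even sites or mostly on the odd sites. No martingale bookkeeping can bound this by $O(\alpha)$. Your sketch also conflates two processes. The increments $\Psi(\mathcal{R}_{u_S}\nu)e_i(s_i-m_i)$ belong to the non-martingale $S\mapsto m(\mathcal{R}_{u_S}\nu)$. The Doob martingale of a sequential revelation instead takes the values $m(\mathcal{R}_{u}\mathcal{T}_{h'}\nu)$, which are means of tilted pinned measures, and the hypothesis does not control their fluctuations.

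Your fallback is the correct argument and is exactly the paper's proof. Run the Chen--Eldan jump process in which each free coordinate $i$ jumps to $s_i=\mathrm{sign}(v_i)$ at rate $|v_i|\bigl(1+s_i m_i(\mu_t)\bigr)$, where $\mu_t=\mathcal{R}_{u(t)}\mathcal{T}_{(1-t)v}\nu$. This makes $\mu_t$ a martingale from $\mathcal{T}_v\nu$ to $\mathcal{R}_{u(1)}\nu$. Couple it with the non-martingale $\nu_t=\mathcal{R}_{u(t)}\nu$, which starts at $\nu$ and coincides with $\mu_1$ at $t=1$. Computing the drift of $m(\nu_t)$ gives $m(\mathcal{T}_v\nu)-m(\nu)=\int_0^1\mathbb{E}[\Psi(\nu_t)P_tQ_tv]\,dt$, with diagonal $P_t,Q_t$ whose entries are at most $2$ in absolute value. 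The only new check is $u(t)\in\mathrm{sign}(\mathcal{V})$, which follows from the jump rule and downward closure, as you say.

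This works where your main route fails because it uses only the first-moment drift of $m(\nu_t)$. It evaluates influence matrices only of pinnings of $\nu$, never the covariance of a tilted measure. It thus controls the displacement from the base point without asserting Lipschitzness anywhere else on the segment. This should be your proof, not a fallback.
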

\begin{remark}
    In \cite[Lemma~68]{chen2022localization}, this statement is proved for the special case $\mc{V} = \mb{R}^n$. We note that arXiv v2 of \cite{chen2022localization} contains a typographical error wherein $\ell_2$-independence is replaced by the weaker assumption of spectral independence; the correct statement with $\ell_2$-independence appears in arXiv v1 of \cite{chen2022localization}. 
\end{remark}

\begin{proof}[Proof of \cref{lem:lemma68}]
Fix $v \in \mathcal{V}$. We will construct a continuous-time jump process $u(t) \in \mathrm{sign}(\mathcal{V})$ (initialized at $u(0) = 0$) and an associated measure-valued process
\[\mu_t := \mathcal{R}_{u(t)} \mathcal{T}_{(1-t)v}\nu\]
such that $(\mu_t)_{t\geq 0}$ is a martingale with respect to the natural filtration corresponding to $\{u(t)\}_{t\geq 0}$. Note that
\[\mu_0 = \mathcal{T}_v \nu, \qquad \mu_1 = \mathcal{R}_{u(1)} \nu.\]
In \cite{chen2022localization}, such a process is constructed for arbitrary $v \in \mathbb{R}^n$, but with $u(t) \in \{-1,0,1\}^n$. We will follow exactly the same construction: our assumption that $v \in \mathcal{V}$ will guarantee that $u(t) \in \mathrm{sign}(\mc{V})$.\\  

\paragraph{\bf Martingale construction}
We describe the construction of $u(t)$. Initialize $u(0) = 0$. For each $i\in[n]$, define
\[
s_i:=\operatorname{sign}(v_i)\in\{-1,0,+1\},
\qquad
a_i:=|v_i|\ge 0,
\]
so that $s_i a_i = v_i$ for all $i \in [n]$. For $u\in \{-1,0,1\}^n$ and $i\in[n]$, define the updated state $u^{(i)} = (u^{(i)}_1,\dots, u^{(i)}_n)$ by
\[
u^{(i)}_j :=
\begin{cases}
s_i & j=i,\\
u_j & j\neq i.
\end{cases}
\]
Given time $t\in[0,1]$ and  $u(t) \in \{-1,0,1\}^n$, define the deterministic measure
\begin{equation}
\label{eq:pi_t_u_def_CE}
\pi_{t,u} := \mathcal{R}_{u(t)}\bigl(\mathcal{T}_{(1-t) v}\nu\bigr).
\end{equation}
Then define the instantaneous jump rate for coordinate $i$ by
\begin{equation}
\label{eq:lambda_def_CE}
\lambda_i(t,u(t))
:= \begin{cases}
a_i\bigl(1+s_i\,m_i(\pi_{t,u})\bigr)
&\qquad \text{for }i\text{ with }u_i=0,\\
0 &\qquad \text{for }i\text{ with }u_i \neq 0
\end{cases}
\end{equation}
Note that, since $m_i(\pi_{t,u})\in[-1,1]$, we have $\lambda_i(t,u(t))\in[0,2a_i]$ and the total rate is bounded by $2\|v\|_1$ uniformly in $(t,u(t))$.

We define $u(t)$ to be the time-inhomogeneous continuous-time Markov chain with generator $\mathcal{L}$ acting on  $F: \{-1,0,1\}^n \to\mathbb{R}$ by
\[
(\mathcal{L}_t F)(u)
= \sum_{i \in [n]} \lambda_i(t,u)\,\bigl(F(u^{(i)})-F(u)\bigr) = 
\sum_{i:\,u_i=0}\lambda_i(t,u)\,\bigl(F(u^{(i)})-F(u)\bigr).
\]
In words, at each time $t$, every coordinate $i$ which is not already $\pm 1$ attempts to jump to $s_i$ with rate $\lambda_i(t, u(t))$; once the coordinate jumps to $s_i$, it stays there forever. In particular, since $v \in \mathcal{V}$, it follows that $u(t) \in \mathrm{sign}(\mc{V})$ for all $t \in [0,1]$.

Let $(\mathcal{F}_t)$ be the natural filtration generated by $u(t)$. We set $\mu_t:=\pi_{t,u(t)}$. The proof that $\mu_t$ is a martingale with respect to $\mc{F}_t$ follows exactly as in \cite[Proposition~18]{chen2022localization}.\\

\paragraph{\bf Lipschitzness}
We couple the martingale $(\mu_t)_{t\geq 0}$ with the measure-valued process $(\nu_t)_{t\geq 0}$ defined by
\[\nu_t = \mathcal{R}_{u(t)}\nu,\]
which is not a martingale. Observe that $\nu_1 = \mu_1$ almost surely. Therefore, since $\mu_t$ is a martingale, we have
\[\mb{E}[m(\nu_1)] = \mb{E}[m(\mu_1)] = \mb{E}[m(\mu_0)] = m(\mu_0),\]
where we have dropped the final expectation since $\mu_0 = \mc{T}_v \nu$ is a deterministic measure.

Then, exactly as in \cite[Eq~66]{chen2022localization}, we have
\begin{align*}
    m(\mathcal{T}_{v} \nu) - m(\nu) &= m(\mu_0) - m(\nu_0)\\
    &= \mb{E}[m(\nu_1)] - m(\nu_0)\\
    &= \int_{0}^{1} \mb{E}[\Psi(\nu_t) P_t Q_t v]dt,
\end{align*}
where $P_t$ and $Q_t$ are diagonal matrices with entries of absolute value at most $2$. In particular,
$\| P_t Q_t v\|_{2} \leq 4 \|v\|_2$. 

Therefore,
\begin{align*}
    \|m(\mathcal{T}_v \nu) - m(\nu)\|_{2}
    &= \bigg\|\int_{0}^{1} \mb{E}[\Psi(\nu_t) P_t Q_t v]dt\bigg\|_{2}\\
    &\leq \int_{0}^{1} \|\mb{E}[\Psi(\nu_t) P_t Q_t v]\|_{2}dt\\
    &\leq \int_{0}^{1} \mb{E} [\|\Psi(\nu_t) P_t Q_t v\|_{2}]dt\\
    &\leq \int_{0}^{1} \mb{E} [\|\Psi(\nu_t)\|_{\mathrm{op}} \|P_t Q_t v\|_{2}]dt\\
    &\leq 4\alpha \|v\|_2,
\end{align*}
where the last line uses that $\nu_t = \mathcal{R}_{u(t)}\nu$, $u(t) \in \mathrm{sign}(\mathcal{V})$, and $\nu$ is $\alpha$-$\ell_2$-independent restricted to $\mathrm{sign}(\mc{V})$. 
\end{proof}

\section{Quadratic stability via sparse convex duality and proof of \cref{thm:entropic-independence}}\label{sec:quad}

\begin{proposition}[Quadratic stability from sparse $\ell_2$-independence]
    \label{prop:quadratic-stability}
Let $\nu$ be a probability measure on $\{\pm 1\}^n$. For $c \in (0,1]$ and $\alpha > 0$, suppose that $\nu$ is $\alpha$ $\ell_2$-independent restricted to $\mathrm{sign}{(\mathrm{Sparse}_c)}$. Then, for all $\mu \ll \nu$,
\[
\|m (\mu) - m(\nu)\|_2^2 \leq \frac{8\alpha}{c}\KL(\mu \|\nu).
\]
\end{proposition}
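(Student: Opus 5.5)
The plan is to combine the Donsker--Varadhan formula (\cref{thm:DV}) with the Lipschitz bound of \cref{lem:lemma68} (applied with $\mathcal{V}=\mathrm{Sparse}_c$), testing only against sparse linear functionals. Write $w := m(\mu)-m(\nu)$. For any $v\in\mathrm{Sparse}_c$, take $\varphi(x)=\langle v,x\rangle$ in \cref{thm:DV}. This gives
\[
\KL(\mu\|\nu)\;\ge\;\langle v, m(\mu)\rangle-\log \mathbb{E}_\nu\bigl[e^{\langle v,X\rangle}\bigr].
\]
So it suffices to bound the log-moment generating function $\Lambda(v):=\log\mathbb{E}_\nu[e^{\langle v,X\rangle}]$ by $\langle v,m(\nu)\rangle$ plus a quadratic term.

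\textbf{Step 1: quadratic bound on $\Lambda$ along sparse directions.} Consider $s\mapsto\Lambda(sv)$ for $s\in[0,1]$. Differentiating gives $\frac{d}{ds}\Lambda(sv)=\langle v, m(\mathcal{T}_{sv}\nu)\rangle$. Since $\mathrm{Sparse}_c$ is closed under scaling, $sv\in\mathrm{Sparse}_c$, and $0\in\mathrm{Sparse}_c$. Hence \cref{lem:lemma68} applies with $\mathcal{V}=\mathrm{Sparse}_c$, and Cauchy--Schwarz gives
\[
\langle v, m(\mathcal{T}_{sv}\nu)-m(\nu)\rangle\le \|v\|_2\cdot 4\alpha s\|v\|_2 .
\]
Integrating over $s\in[0,1]$ and using $\Lambda(0)=0$ yields $\Lambda(v)\le\langle v,m(\nu)\rangle+2\alpha\|v\|_2^2$. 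Consequently,
\[
\KL(\mu\|\nu)\;\ge\;\sup_{v\in\mathrm{Sparse}_c}\Bigl\{\langle v,w\rangle-2\alpha\|v\|_2^2\Bigr\}.
\]

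\textbf{Step 2: evaluate the restricted conjugate.} Fix a support $S\subseteq[n]$ with $|S|=\lceil cn\rceil$ and restrict to vectors $v$ supported on $S$. The supremum over such $v$ is attained at $v=w_S/(4\alpha)$, where $w_S$ is $w$ restricted to $S$, and it equals $\|w_S\|_2^2/(8\alpha)$. Maximizing over $S$ means choosing the $\lceil cn\rceil$ coordinates of largest $|w_i|$. This gives the squared Ky Fan--type quantity $\sum_{i\le \lceil cn\rceil}(w^*_i)^2$, where $w^*$ is the decreasing rearrangement of $|w|$.

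\textbf{Step 3: compare with the full $\ell_2$ norm.} The average of the top $k$ squared entries is at least the overall average, so
\[
\sum_{i\le k}(w^*_i)^2\ge \frac{k}{n}\|w\|_2^2\ge c\|w\|_2^2 \qquad\text{for } k=\lceil cn\rceil.
\]
Combining with Step 2 gives $\KL(\mu\|\nu)\ge c\|w\|_2^2/(8\alpha)$, which is the claim.

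The main obstacle is Step 1: we must make sure the whole tilt path $\{sv\}$ stays inside the allowed family, so that \cref{lem:lemma68} can be invoked at every $s$. This holds because sparsity is scale-invariant and the pinnings visited in that lemma lie in the downward-closed set $\mathrm{sign}(\mathrm{Sparse}_c)$. The remaining steps are elementary optimization.
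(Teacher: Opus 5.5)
Your proposal is correct and follows essentially the same route as the paper: \cref{lem:lemma68} with $\mathcal{V}=\mathrm{Sparse}_c$ gives the Lipschitz bound with constant $4\alpha$, the fundamental theorem of calculus bounds the centered log-Laplace transform by $2\alpha\|v\|_2^2$ on sparse $v$, and Donsker--Varadhan with sparse linear test functions is combined with the Ky Fan conjugate computation (\cref{lem:sparse-conjugate}) and \cref{lem:kyfan_vs_l2}. The only difference is packaging: the paper isolates the last three steps as \cref{prop:lemma69} for a general Lipschitz constant $\varepsilon$ and then substitutes $\varepsilon=4\alpha$.
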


\begin{remark}
The inequality in \cref{prop:quadratic-stability} is a {quadratic entropic stability} (or
transportation-type) bound: it upper-bounds the squared $\ell_2$-distance between mean vectors by relative
entropy.
When $c=1$, the set $\mathrm{Sparse}_1$ equals $\R^n$, hence $\mathrm{sign}(\mathrm{Sparse}_1)=\{-1,0,1\}^n$,
and \cref{prop:quadratic-stability} recovers the quadratic stability theorem of Chen--Eldan~\cite{chen2022localization}, which assumes a
uniform $\ell_2$-independence bound under {all} pinnings.
The main novelty here is that we only require $\ell_2$-independence under \emph{sparse} pinnings (support at
most $cn$), at the cost of a factor $c^{-1}$. In many applications (including those
motivated by \cite{JMPV}) one can control independence only for a restricted family of pinnings, and the
parameter $c$ quantifies how far such control extends.
\end{remark}

The proof of \cref{prop:quadratic-stability} decomposes into two conceptual steps:
\begin{enumerate}[label=(\roman*),leftmargin=2.2em]
    \item {Sparse localization $\Rightarrow$ Lipschitz mean map.}
    In \cref{lem:lemma68} (proved in the previous section), we adapted the Chen--Eldan negative-fields
    localization argument to show that restricted $\ell_2$-independence over $\mathrm{sign}(\mathcal{V})$
    implies
    \[
    \|m(\mc{T}_v\nu)-m(\nu)\|_2 \le 4\alpha\|v\|_2
    \qquad\forall v\in\mathcal{V}.
    \]
    Applying this with $\mathcal{V}=\mathrm{Sparse}_c$ yields a Lipschitz estimate along sparse directions.
    \item \emph{Sparse Lipschitzness $\Rightarrow$ quadratic stability.}
    We then convert the above Lipschitz estimate into a global quadratic stability inequality
    for all $\mu\ll\nu$ using a restricted convex-duality argument in which we test only {sparse linear} functionals. This introduces a geometric computation of a restricted quadratic
    conjugate over $\mathrm{Sparse}_c$ (captured by Ky Fan norms) and is the reason why the factor
    $c^{-1}$ appears.
\end{enumerate}
We now carry out step~(ii). Step~(i) is provided by \cref{lem:lemma68}.

\subsection{Ky Fan norms and a sparse quadratic conjugate}

\begin{definition}[Ky Fan $m$-norm]\label{def:kyfan}
For $x\in\mathbb{R}^n$ and $m\in\{1,2,\dots,n\}$ define
\[
\|x\|_{2,(m)}^2 \;:=\; \max_{\substack{S\subseteq[n]\\ |S|\le m}}\ \sum_{i\in S} x_i^2.
\]
Equivalently, if $|x|_{(1)}\ge |x|_{(2)}\ge \cdots \ge |x|_{(n)}$ are the absolute values of coordinates of $x$
sorted in nonincreasing order, then
\[
\|x\|_{2,(m)}^2 \;=\; \sum_{j=1}^{m} |x|_{(j)}^2.
\]
\end{definition}

\begin{lemma}[Ky Fan $m$-norm vs.\ $\ell_2$ norm]\label{lem:kyfan_vs_l2}
Let $x\in\mathbb{R}^n$ and $m\in\{1,2,\dots,n\}$. Then
\[
\frac{m}{n}\,\|x\|_2^2 \;\le\; \|x\|_{2,(m)}^2 \;\le\; \|x\|_2^2.
\]
\end{lemma}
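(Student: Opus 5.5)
We work directly from the sorted-order description of the Ky Fan norm in \cref{def:kyfan}. Write $a_j := |x|_{(j)}^2$ for $j=1,\dots,n$, so that $a_1\ge a_2\ge\cdots\ge a_n\ge 0$ and $\|x\|_2^2=\sum_{j=1}^n a_j$. In this notation $\|x\|_{2,(m)}^2=\sum_{j=1}^m a_j$, and both inequalities become elementary statements about a nonincreasing sequence of nonnegative numbers.

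\emph{Upper bound.} This is immediate. The sum $\sum_{j=1}^m a_j$ is a partial sum of the nonnegative terms $a_j$, so it is at most the full sum $\sum_{j=1}^n a_j=\|x\|_2^2$. Equivalently, in the max-over-subsets formulation, each $S\subseteq[n]$ satisfies $\sum_{i\in S}x_i^2\le\sum_{i\in[n]}x_i^2$.

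\emph{Lower bound.} Here we use an averaging argument: the average of the $m$ largest terms is at least the average of all $n$ terms. Set $A:=\frac1m\sum_{j\le m}a_j$. Since the sequence is nonincreasing, every $a_j$ with $j>m$ satisfies $a_j\le a_m\le A$. Therefore
\[
\sum_{j=1}^n a_j \le mA+(n-m)A = nA = \frac{n}{m}\sum_{j=1}^m a_j .
\]
Rearranging gives $\frac{m}{n}\|x\|_2^2\le\|x\|_{2,(m)}^2$. The same inequality can also be obtained by double counting: average $\sum_{i\in S}x_i^2$ over all $S$ of size exactly $m$. Each index lies in a fraction $m/n$ of these sets, so the average equals $\frac mn\|x\|_2^2$, and the maximum over $S$ is at least this average.

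Neither step presents a real obstacle. The only point requiring care is the comparison $a_j\le A$ for $j>m$, which relies on the monotonicity of the sorted sequence.
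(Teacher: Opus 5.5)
Your proposal is correct and follows essentially the same route as the paper: the upper bound is a partial sum of nonnegative terms, and the lower bound is the averaging fact that the mean of the top $m$ sorted squares is at least the mean of all $n$. You also spell out why that averaging fact holds (each $a_j$ with $j>m$ satisfies $a_j\le a_m\le A$), which the paper simply asserts, and your double-counting remark is a valid alternative.
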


\begin{proof} The upper bound is immediate. For the lower bound, write the squared coordinates in nonincreasing order: $a_1\ge a_2\ge \cdots\ge a_n\ge 0$, where $a_j:=|x|_{(j)}^2$. Then \[ \|x\|_{2,(m)}^2=\sum_{j=1}^m a_j. \] Since the average of the top $m$ terms is at least the average of all $n$ terms, \[ \frac{1}{m}\sum_{j=1}^m a_j \;\ge\; \frac{1}{n}\sum_{j=1}^n a_j, \] which rearranges to \[ \sum_{j=1}^m a_j \;\ge\; \frac{m}{n}\sum_{j=1}^n a_j. \] But $\sum_{j=1}^n a_j=\sum_{i=1}^n x_i^2=\|x\|_2^2$, giving \[ \|x\|_{2,(m)}^2 \;\ge\; \frac{m}{n}\,\|x\|_2^2. \qedhere \] \end{proof}

\begin{lemma}[Sparse quadratic conjugate]\label{lem:sparse-conjugate}
Fix $c\in(0,1]$ and $m:=\lceil cn\rceil$. Let $\varepsilon>0$ and define
\[
\Phi_c(x)\;:=\;\sup_{v\in \mathrm{Sparse}_c}\left\{\langle x,v\rangle-\frac{\varepsilon}{2}\|v\|_2^2\right\}.
\]
Then for every $x\in\mathbb{R}^n$,
\[
\Phi_c(x)\;=\;\frac{1}{2\varepsilon}\,\|x\|_{2,(m)}^2
\;\ge\; \frac{c}{2\varepsilon}\,\|x\|_2^2.
\]
\end{lemma}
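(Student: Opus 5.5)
The plan is to compute the supremum by first fixing the support of $v$ and then optimizing over supports. Write $\mathrm{Sparse}_c=\bigcup_{|S|\le m}\{v:\supp(v)\subseteq S\}$, so that
\[
\Phi_c(x)=\max_{|S|\le m}\ \sup_{v:\ \supp(v)\subseteq S}\Bigl\{\langle x,v\rangle-\tfrac{\varepsilon}{2}\|v\|_2^2\Bigr\}.
\]
This works because a vector with support of size at most $m$ lies in some coordinate subspace indexed by a set $S$ with $|S|\le m$. Conversely, every vector supported in such an $S$ lies in $\mathrm{Sparse}_c$.

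For a fixed $S$, the inner problem is a strictly concave quadratic in the coordinates $(v_i)_{i\in S}$, and it separates coordinate by coordinate. The supremum of $x_iv_i-\frac{\varepsilon}{2}v_i^2$ over $v_i\in\mathbb{R}$ is attained at $v_i=x_i/\varepsilon$ and equals $x_i^2/(2\varepsilon)$; this is completing the square. Hence the inner supremum equals $\frac{1}{2\varepsilon}\sum_{i\in S}x_i^2$, and it is attained. Taking the maximum over $|S|\le m$ gives exactly $\frac{1}{2\varepsilon}\|x\|_{2,(m)}^2$ by \cref{def:kyfan}.

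The inequality then follows from \cref{lem:kyfan_vs_l2}, which gives $\|x\|_{2,(m)}^2\ge \frac{m}{n}\|x\|_2^2$. Since $m=\lceil cn\rceil\ge cn$, we get $\frac{m}{n}\ge c$, and so $\Phi_c(x)\ge \frac{c}{2\varepsilon}\|x\|_2^2$.

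There is no real obstacle in this argument. The only point needing care is the decomposition of the sparse set into coordinate subspaces, so that the supremum genuinely splits into a maximum over supports of separable quadratic problems. The optimizer $v_i=x_i/\varepsilon$ may vanish on some $i\in S$, but that only shrinks its support and so keeps it in $\mathrm{Sparse}_c$. Thus both the upper bound (the supremum over each subspace) and the lower bound (exhibiting an optimizer on the top-$m$ coordinates of $x$) hold.
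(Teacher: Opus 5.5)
Your proposal is correct and matches the paper's proof: fix a support $S$ with $|S|\le m$, maximize each separable term $x_iv_i-\frac{\varepsilon}{2}v_i^2$ at $v_i=x_i/\varepsilon$ to get $\frac{1}{2\varepsilon}\sum_{i\in S}x_i^2$, then maximize over $S$. The final inequality follows from \cref{lem:kyfan_vs_l2} together with $m/n\ge c$.
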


\begin{proof}
Fix $S\subseteq[n]$ with $|S|\le m$ and restrict to vectors $v$ supported on $S$. Then
\[
\langle x,v\rangle-\frac{\varepsilon}{2}\|v\|_2^2
=\sum_{i\in S}\left(x_iv_i-\frac{\varepsilon}{2}v_i^2\right).
\]
Each summand is maximized at $v_i=x_i/\varepsilon$, yielding
\[
\sup_{\mathrm{supp}(v)\subseteq S}\left\{\langle x,v\rangle-\frac{\varepsilon}{2}\|v\|_2^2\right\}
=\frac{1}{2\varepsilon}\sum_{i\in S}x_i^2.
\]
Maximizing over $S$ gives $\Phi_c(x)=\frac{1}{2\varepsilon}\|x\|_{2,(m)}^2$.
The final inequality follows from \cref{lem:kyfan_vs_l2} and $m/n\ge c$.
\end{proof}

\subsection{From sparse Lipschitzness to quadratic stability}

\begin{proposition}
\label{prop:lemma69}
    Let $\nu$ be a probability measure on $\{\pm 1\}^n$. %with $\on{supp}(\nu) \subseteq \mathrm{SL}_{n,k}$. 
    For $c \in (0,1]$ and $\eps > 0$, suppose that
    \begin{equation}
    \label{eqn:lipschitz-assumption}
        \|m(\mc{T}_v \nu) - m(\nu) \|_{2} \leq \eps \|v\|_2, \qquad \forall v \in \mathrm{Sparse}_c.
    \end{equation}
    Then, for all $\mu \ll \nu$,
    \begin{equation}
    \label{eqn:quadratic-stability}
    \|m(\mu) - m(\nu) \|_{2}^{2} \leq \frac{2\eps}{c} \KL(\mu \| \nu).     \end{equation}
\end{proposition}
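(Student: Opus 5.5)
The plan is to run the Chen--Eldan convex-duality argument, but to test the Donsker--Varadhan formula (\cref{thm:DV}) only against sparse linear functionals. The restricted conjugate is then exactly the quantity computed in \cref{lem:sparse-conjugate}.

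\textbf{Step 1: sparse linear test functions.} Fix $\mu\ll\nu$ and $v\in\mathrm{Sparse}_c$, and apply \cref{thm:DV} with $\varphi(x)=\langle v,x\rangle$. This gives
\[
\KL(\mu\|\nu)\ \ge\ \langle v,m(\mu)\rangle-\log\mathbb{E}_\nu\bigl[e^{\langle v,X\rangle}\bigr].
\]
Everything therefore reduces to an upper bound on the log-Laplace transform along sparse directions.

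\textbf{Step 2: bound the log-Laplace transform.} Set $F(t):=\log\mathbb{E}_\nu[e^{t\langle v,X\rangle}]$ for $t\in[0,1]$. Then $F(0)=0$, and differentiating under the finite sum gives $F'(t)=\langle v,m(\mathcal{T}_{tv}\nu)\rangle$ by the definition \eqref{eq:TiltDef} of the tilt.

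The support of $tv$ is contained in that of $v$, so $tv\in\mathrm{Sparse}_c$. Cauchy--Schwarz and the hypothesis \eqref{eqn:lipschitz-assumption} then give
\[
F'(t)-\langle v,m(\nu)\rangle\ \le\ \|v\|_2\,\|m(\mathcal{T}_{tv}\nu)-m(\nu)\|_2\ \le\ \eps t\|v\|_2^2 .
\]
Integrating over $[0,1]$ yields $F(1)\le\langle v,m(\nu)\rangle+\frac{\eps}{2}\|v\|_2^2$.

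\textbf{Step 3: optimize over sparse $v$.} Combining Steps 1 and 2 and writing $x:=m(\mu)-m(\nu)$, we get
\[
\KL(\mu\|\nu)\ \ge\ \langle v,x\rangle-\frac{\eps}{2}\|v\|_2^2\qquad\text{for all } v\in\mathrm{Sparse}_c.
\]
Taking the supremum over $v\in\mathrm{Sparse}_c$ gives exactly $\Phi_c(x)$ from \cref{lem:sparse-conjugate}, with the same $\eps$. That lemma gives $\Phi_c(x)\ge\frac{c}{2\eps}\|x\|_2^2$, and rearranging yields \eqref{eqn:quadratic-stability}.

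No step here is a real obstacle. The one point needing care is Step 2: the Lipschitz hypothesis must be applied along the entire segment $\{tv: t\in[0,1]\}$. This is legitimate because $\mathrm{Sparse}_c$ is closed under scaling, and that is precisely why the supremum in Step 3 is taken only over sparse $v$. The factor $c^{-1}$ then enters solely through the Ky Fan comparison in \cref{lem:kyfan_vs_l2}.
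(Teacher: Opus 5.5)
Your proposal is correct and follows essentially the same route as the paper: bound the centered log-Laplace transform on sparse $v$ by integrating the Lipschitz hypothesis along $\{tv: t\in[0,1]\}$, apply Donsker--Varadhan with sparse linear test functions, and conclude via \cref{lem:sparse-conjugate}. The only difference is cosmetic: you parametrize by the one-dimensional function $F(t)$ rather than working with the gradient of $h(v)=f(v)-\langle m(\nu),v\rangle$ as the paper does.
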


\begin{remark}
In \cite[Lemma~69]{chen2022localization}, such a statement is proved under the stronger assumption that \cref{eqn:lipschitz-assumption} holds for all $v \in \mb{R}^n$. We emphasize that while we only assume \cref{eqn:lipschitz-assumption} for $v \in \mathrm{Sparse}_c$, we are able to obtain the conclusion \cref{eqn:quadratic-stability} for all $\mu \ll \nu$, at the expense of an additional factor of $c^{-1}$. 
\end{remark}

The proof of \cref{prop:lemma69} follows the same outline as \cite[Lemma~69]{chen2022localization}. In fact, our proof is a bit simpler, since we observe that one can replace \cite[Lemma~33]{chen2022localization} by the Donsker-Varadhan variational formula (\cref{thm:DV}). 

\begin{proof}
[Proof of \cref{prop:lemma69}]
Define the log-Laplace transform of $\nu$:
\[
f(v)\;:=\;\log\Bigl(\mathbb{E}_{X\sim \nu}\big[e^{\langle v,X\rangle}\big]\Bigr),
\qquad v\in\mathbb{R}^n.
\]
Note that the mean vector $m(\nu) = \mb{E}_{X \sim \nu}[X]$ satisfies
\[m(\nu):=\nabla f(0)\]
and more generally
\[m(\mc{T}_v \nu) = \nabla f(v).\]
Define the centered convex function
\[
h(v)\;:=\; f(v)-\langle m(\nu),v\rangle,
\]
so that $\nabla h(v)=\nabla f(v)-m(\nu)$ and $h(0)=0$.\\

\paragraph{\bf Bound $h(v)$ on sparse vectors}
Fix $v\in\mathrm{Sparse}_c$. Since $\mathrm{Sparse}_c$ is a cone, $tv\in\mathrm{Sparse}_c$ for all $t\in[0,1]$.
By the fundamental theorem of calculus and Cauchy--Schwarz,
\[
h(v)= h(v) - h(0) = \int_0^1 \langle \nabla h(tv),v\rangle\,dt
\le \int_0^1 \|\nabla h(tv)\|_2\,\|v\|_2\,dt.
\]
Using \cref{eqn:lipschitz-assumption} at $tv$ gives $\|\nabla h(tv)\|_2\le \eps \|tv\|_2=\eps t\|v\|_2$, hence
\begin{equation}
h(v)\le \int_0^1 (\eps t\|v\|_2)\|v\|_2\,dt = \frac{\eps}{2}\|v\|_2^2.
\label{eq:h_upper_sparse}
\end{equation}

\paragraph{\bf Apply Donsker-Varadhan with sparse test vectors}
The Donsker--Varadhan variational principle states
\[
\KL(\mu\|\nu)=\sup_{\varphi}\Bigl\{\mathbb{E}_\mu[\varphi]-\log\mathbb{E}_\nu[e^{\varphi}]\Bigr\},
\]
where the supremum ranges over all functions $\varphi$ satisfying $\int e^{\varphi} d\nu < \infty$. Restricting to linear functions $\varphi(\cdot)=\langle v,\cdot\rangle$ with $v\in\mathrm{Sparse}_c$ yields a lower bound:
\[
\KL(\mu\|\nu)
\ge \sup_{v\in\mathrm{Sparse}_c}\Bigl\{\langle m(\mu),v\rangle - f(v)\Bigr\}
= \sup_{v\in\mathrm{Sparse}_c}\Bigl\{\langle m(\mu) - m(\nu),v\rangle - h(v)\Bigr\}.
\]
Using \eqref{eq:h_upper_sparse},
\[
\KL(\mu\|\nu)
\ge \sup_{v\in\mathrm{Sparse}_c}\left\{\langle m(\mu) - m(\nu),v\rangle - \frac{\eps}{2}\|v\|_2^2\right\}.
\]
Now apply \cref{lem:sparse-conjugate} to conclude that
\[
\KL(\mu\|\nu)\ge \frac{c}{2\eps}\|m(\mu) - m(\nu)\|_{2}^2,
\]
as claimed.
\end{proof}

\begin{proof}[Proof of \cref{prop:quadratic-stability} (and hence \cref{thm:main-quadratic-stability})]
Apply \cref{lem:lemma68} with $\mathcal{V}=\mathrm{Sparse}_c$. Since $\nu$ is
$\alpha$-$\ell_2$-independent restricted to $\mathrm{sign}(\mathrm{Sparse}_c)$, we obtain
\[
\|m(\mc{T}_v \nu)-m(\nu)\|_2 \le 4\alpha \|v\|_2
\qquad \forall v\in \mathrm{Sparse}_c.
\]
Thus the hypothesis of \cref{prop:lemma69} holds with $\eps=4\alpha$, and so for every $\mu\ll\nu$,
\[
\|m(\mu)-m(\nu)\|_2^2
\le \frac{2(4\alpha)}{c}\,\KL(\mu\|\nu)
= \frac{8\alpha}{c}\,\KL(\mu\|\nu). \qedhere
\]
\end{proof}

\subsection{Proof of \cref{thm:entropic-independence}}

\begin{proof}[Proof of \cref{thm:entropic-independence}]
Let $\mu\ll\nu$ be measures on $\binom{[n]}{k}$.
Using \cref{fact:kl-chi2-inequality} and the definition of $q_\mu,q_\nu$,
\begin{align*}
\KL(q_{\mu}\|q_{\nu})
&\le \chi^2(q_{\mu}\|q_{\nu})
= \sum_{i\in[n]: q_{\nu}(i) > 0}\frac{\left(\frac{\P_\mu[i\in S]}{k}-\frac{\P_\nu[i\in S]}{k}\right)^2}{\P_\nu[i\in S]/k}
= \frac{1}{k}\sum_{i\in[n]: q_{\nu}(i) > 0}\frac{\left(\P_\mu[i\in S]-\P_\nu[i\in S]\right)^2}{\P_\nu[i\in S]} \\
&\le \frac{1}{bk}\sum_{i\in[n]}\left(\P_\mu[i\in S]-\P_\nu[i\in S]\right)^2,
\end{align*}
where the last inequality uses marginal boundedness.

Let $\widetilde{\mu},\widetilde{\nu}$ be the pushforwards to $\{\pm1\}^n$ as in \cref{subsec:slice-embed}.
Then $\KL(\widetilde{\mu}\|\widetilde{\nu})=\KL(\mu\|\nu)$ and
\[
\|m(\widetilde{\mu})-m(\widetilde{\nu})\|_2^2
=
4\sum_{i=1}^n\left(\P_\mu[i\in S]-\P_\nu[i\in S]\right)^2
\]
by \cref{eq:mean-vs-marg}. By assumption, $\widetilde{\nu}$ is $c$-sparse $\alpha$-$\ell_2$-independent, so
\cref{prop:quadratic-stability} gives
\[
\|m(\widetilde{\mu})-m(\widetilde{\nu})\|_2^2 \le \frac{8\alpha}{c}\,\KL(\widetilde{\mu}\|\widetilde{\nu})
= \frac{8\alpha}{c}\,\KL(\mu\|\nu).
\]
Substituting into the previous inequality yields
\[
\KL(q_{\mu}\|q_{\nu})
\le \frac{1}{bk}\cdot \frac{1}{4}\,\|m(\widetilde{\mu})-m(\widetilde{\nu})\|_2^2
\le \frac{1}{bk}\cdot \frac{1}{4}\cdot \frac{8\alpha}{c}\,\KL(\mu\|\nu)
= \frac{2\alpha}{bck}\,\KL(\mu\|\nu),
\]
i.e.\ $\nu$ is $(2\alpha/(bc))$-entropically independent.
\end{proof}

\section{Application: Down-up walk on fixed-size independent sets}
\label{sec:application}

Let $G=(V,E)$ be a graph with $|V|=n$ and maximum degree at most $\Delta$.
Write $\cI_k(G)$ for the collection of independent sets of size $k$:
\[
\cI_k(G)\;:=\;\{I\subseteq V:\ I\ \text{is independent and}\ |I|=k\},
\]
and let $\mu_k(G)$ be the uniform distribution on $\cI_k(G)$ (when $\cI_k(G)\neq\emptyset$).

Fix $k\ge 1$ and set $\nu:=\mu_k(G)$.
We define a random sequence of measures $\nu_0,\nu_1,\dots,\nu_k$ as follows.
Sample $I\sim \nu$ and, conditional on $I$, sample a uniformly random ordering
$(U_1,\dots,U_k)$ of the elements of $I$ (equivalently, a uniform random permutation of $I$).
For $t\in\{0,1,\dots,k\}$, define the pinned set
\[
S_t \;:=\; \{U_1,\dots,U_t\},
\qquad S_0:=\emptyset,
\]
and define the {residual graph}
\begin{equation}
\label{eq:residual_graph_def}
G_t \;:=\; G\big[V\setminus (S_t\cup N_G(S_t))\big],
\end{equation}
where $N_G(S)$ denotes the neighborhood of $S$ in $G$.
Finally, define $\nu_t$ to be the conditional law of the {remaining} vertices $I\setminus S_t$:
\begin{equation}
\label{eq:nut_def}
\nu_t \;:=\; \mathrm{Law}\big(I\setminus S_t \,\big|\, S_t\big),
\end{equation}
viewed as a measure on $\cI_{k-t}(G_t)$.

\begin{lemma}
\label{lem:uniformity_residual}
For every $t\in\{0,1,\dots,k\}$, conditional on $G_t$ (or on $S_t$), the measure $\nu_t$
is the uniform distribution on $\cI_{k-t}(G_t)$.
\end{lemma}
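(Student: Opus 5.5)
The plan is a direct computation of the joint law of $(S_t, I\setminus S_t)$, followed by conditioning. Fix $t$ and a $t$-element set $A\subseteq V$, and take $J\subseteq V\setminus A$ with $|J|=k-t$. I will compute $\P[S_t=A,\ I\setminus S_t=J]$. This event occurs iff $I=A\cup J$ and the first $t$ entries of the uniform ordering of $I$ are exactly the elements of $A$, in some order. Given $I$ of size $k$, the second event has probability $t!(k-t)!/k!=\binom{k}{t}^{-1}$, and this does not depend on $A$ or $J$. Since $\nu$ is uniform on $\cI_k(G)$,
\[
\P[S_t=A,\ I\setminus S_t=J]=\frac{\mathbbm{1}\{A\cup J\in \cI_k(G)\}}{|\cI_k(G)|\binom{k}{t}}.
\]

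The key combinatorial step is the equivalence
\[
A\cup J\in\cI_k(G)\iff A\text{ is independent and }J\in \cI_{k-t}(G_A),
\qquad G_A:=G[V\setminus(A\cup N_G(A))].
\]
To see this, note that $A\cup J$ is independent iff $A$ and $J$ are each independent and there are no edges between them. Since $J\cap A=\emptyset$, having no edges between $A$ and $J$ means exactly $J\cap N_G(A)=\emptyset$, i.e.\ $J\subseteq V(G_A)$. Independence of $J$ in $G$ is the same as independence of $J$ in the induced subgraph $G_A$.

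Conditioning on $S_t=A$, for any $A$ with positive probability, the conditional law of $J$ is therefore proportional to $\mathbbm{1}\{J\in\cI_{k-t}(G_A)\}$. This is the uniform distribution on $\cI_{k-t}(G_A)=\cI_{k-t}(G_t)$, which proves the claim conditional on $S_t$.

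For conditioning on $G_t$ alone, observe that $G_t$ is a function of $S_t$. The law of $I\setminus S_t$ given $G_t$ is then a mixture, over the values $A$ compatible with $G_t$, of the conditional laws given $S_t=A$. Each of these is uniform on the same set $\cI_{k-t}(G_t)$, so the mixture is uniform on it as well.

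There is no real obstacle here. The only points that need care are that the ordering probability is independent of $(A,J)$, and that $J$ automatically avoids $A$, which is what makes the reduction to $G_A$ exact. The edge cases $t=0$, where $\nu_0=\nu$, and $t=k$, where the measure is the point mass on $\emptyset$, are covered by the same computation.
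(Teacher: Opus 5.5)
Your proof is correct and follows essentially the same route as the paper: both identify the fiber $\{I\in\cI_k(G): S_t\subseteq I\}$ with $\cI_{k-t}(G_t)$ via $I\mapsto I\setminus S_t$ and then use uniformity of $\nu$. Your version is somewhat more careful on two points the paper leaves implicit. First, you check that the random ordering contributes the $I$-independent factor $\binom{k}{t}^{-1}$, so conditioning on $S_t=A$ coincides with conditioning on $A\subseteq I$. Second, you treat conditioning on $G_t$ explicitly through the mixture argument.
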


\begin{proof}
Fix $t$ and condition on $S_t$.
Every $I\in\cI_k(G)$ with $S_t\subseteq I$ can be uniquely written as $I=S_t\cup J$ where
$J$ is an independent set of size $k-t$ in $G_t$ (since $I$ is independent, $J$ cannot use any vertex in
$S_t$ or $N_G(S_t)$, and conversely any such $J$ yields an independent set $S_t\cup J$ in $G$).
Thus the map $I\mapsto I\setminus S_t$ is a bijection between
$\{I\in\cI_k(G): S_t\subseteq I\}$ and $\cI_{k-t}(G_t)$.
Since $\mu_k(G)$ is uniform, its conditioning on $\{S_t\subseteq I\}$ is uniform on that fiber, and hence
the pushforward distribution of $I\setminus S_t$ is uniform on $\cI_{k-t}(G_t)$.
\end{proof}

Recall (following \cite{JMPV}) the constant
\[
\alpha_c(\Delta)
:= \frac{(\Delta-1)^{\Delta-1}}{(\Delta-2)^{\Delta} + (\Delta+1)(\Delta-1)^{\Delta-1}}.
\]
The statement \cite[Eq.~(16)]{JMPV} asserts that for $k=\Omega_\Delta(n)$ and for any $\gamma n \leq k \leq (1-\delta)\alpha_c(\Delta)n$ with
$\ell=\Omega_\Delta(k)$ (there is a typographical error in \cite{JMPV}, where it instead says $k-\ell = \Omega_{\Delta}(k)$,
\begin{equation}
\label{eq:JMPV16_target}
\frac{\E\big[\Ent_{\nu_{k-\ell}}[f]\big]}{\Ent_{\nu_0}[f]}
\;=\;\Omega_{\delta,\Delta, \gamma}(1)
\qquad\text{for all }f:\cI_k(G)\to\R_{\ge 0},
\end{equation}
where the expectation is over the randomness of the subset localization scheme.

The goal of this section is to prove \cref{eq:JMPV16_target}. We will use the following two inputs from \cite{JMPV}:

\begin{lemma}[Bounded marginals for $\mu_r$ {\cite[Lemma~21]{JMPV}}]
\label{lem:JMPV_marginals}
Fix $\Delta\ge 3$ and $\delta,\gamma>0$. There exists $b=b(\Delta,\delta,\gamma)>0$ such that the following
holds.
For every graph $H$ of maximum degree at most $\Delta$ and every integer $r$ satisfying
$\gamma|V(H)|\le r \le (1-\delta)\alpha_c(\Delta)|V(H)|$, we have for all $u\in V(H)$,
\[
\min\{\P_{I\sim \mu_r(H)}[u\in I],\ \P_{I\sim \mu_r(H)}[u\notin I]\}\ \ge\ b.
\]
\end{lemma}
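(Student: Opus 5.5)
The plan is a switching (double-counting) argument on $\cI_r(H)$. Both bounds come from comparing the number of independent sets containing $u$ with the number avoiding it. Throughout, write $N:=|V(H)|$.

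\textbf{Key numerical fact.} We have $\alpha_c(\Delta)<1/(\Delta+1)$, because the denominator of $\alpha_c(\Delta)$ exceeds $(\Delta+1)(\Delta-1)^{\Delta-1}$. Hence any independent set $J$ with $|J|\le r+1$ has closed neighbourhood $N_H[J]$ of size at most $(\Delta+1)(r+1)\le(1-\delta)N+\Delta+1$. So at least $\delta N-\Delta-1$ vertices are ``free'', meaning they can be added to $J$ while keeping it independent.

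\textbf{Upper bound on $\P[u\in I]$.} Build a bipartite relation from $\{I\in\cI_r(H):u\in I\}$ to $\{J\in\cI_r(H):u\notin J\}$ by $I\mapsto (I\setminus\{u\})\cup\{w\}$, where $w$ is free for $I\setminus\{u\}$ and $w\ne u$.
\begin{itemize}
\item Each $I$ has at least $\delta N-\Delta-2$ images.
\item Each $J$ has at most $r$ preimages, since the preimage is determined by the choice of $w\in J$.
\end{itemize}
Double counting gives
\[
\frac{\P[u\in I]}{\P[u\notin I]}\le \frac{r}{\delta N-\Delta-2}\le \frac{2\alpha_c(\Delta)}{\delta}
\]
once $N\ge N_0(\Delta,\delta)$, which bounds $\P[u\notin I]$ below by a constant. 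For $N<N_0$ the set $\cI_r(H)$ is a finite object with $r\ge\gamma N\ge1$. I expect to dispose of this case separately, either by a crude argument or by absorbing it into $b$ (this needs a little care).

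\textbf{Lower bound on $\P[u\in I]$.} Partition $\{J:u\notin J\}$ according to $d:=|J\cap N_H(u)|\in\{0,\dots,\Delta\}$.
\begin{itemize}
\item \emph{Case $d=0$.} Map $J\mapsto J\setminus\{x\}\cup\{u\}$ for $x\in J$. This gives $r$ images per $J$, and each image has at most $N$ preimages (choose the re-added $x$). So this class has mass at most $(N/r)\,\P[u\in I]\le \gamma^{-1}\P[u\in I]$.
\item \emph{Case $d\ge1$.} Map $J\mapsto (J\setminus N_H(u))\cup\{u\}\cup W$, where $W$ is a set of $d-1$ vertices added greedily, each free for the current set. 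By the key fact there are at least $\binom{\delta N/2}{d-1}$ choices of $W$ for large $N$. A preimage is recovered by deleting $u$, deleting $d-1$ of the remaining $r-1$ vertices, and adding back $d$ vertices of $N_H(u)$. This gives at most $\binom{r}{d-1}2^{\Delta}$ preimages.
\item Therefore class $d$ has mass at most $2^\Delta\binom{r}{d-1}\binom{\delta N/2}{d-1}^{-1}\P[u\in I]\le 2^\Delta (2\alpha_c/\delta)^{d-1}C_\Delta\,\P[u\in I]$.
\end{itemize}
Summing over $d$ gives $\P[u\notin I]\le K(\Delta,\delta,\gamma)\P[u\in I]$, hence $\P[u\in I]\ge 1/(1+K)$.

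\textbf{Main obstacle.} The main care point is the $d\ge1$ switching: one must make the added set $W$ (and the choice of removed vertex) canonical enough that the preimage count stays bounded independently of $N$. The forward count must be genuinely of order $N^{d-1}$, which is exactly where $\alpha_c<1/(\Delta+1)$ and the slack $\delta$ are needed.
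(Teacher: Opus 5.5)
Your switching argument is correct, and it is a genuinely different route from the paper's. The paper does not prove this lemma; it imports it as a black box from \cite[Lemma~21]{JMPV}. Your route gives a short, self-contained proof with an explicit $b$. It also shows the upper density hypothesis is used only through $(\Delta+1)r\le(1-\delta)N$, i.e.\ via $(\Delta+1)\alpha_c(\Delta)<1$. So the conclusion holds for all densities up to $(1-\delta)/(\Delta+1)$, and the precise threshold $\alpha_c(\Delta)$ plays no role here; it matters for the independence input \cref{lem:JMPV_linf_indep}. The parameter $\gamma$ enters only through the class $d=0$, where it is genuinely needed: for $r=1$ one has $\P[u\in I]=1/N$. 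The paper's choice simply defers to the same source as its companion $\ell_\infty$-independence input.

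There are two small corrections. First, the threshold $N_0$ is unnecessary. Each of your switchings only adds a vertex to an independent set of size at most $r-1$, and such a set has at least $N-(\Delta+1)(r-1)\ge\delta N+\Delta+1$ free vertices. This gives, for every $N$, at least $\delta N$ images in the first switching and at least $(\delta N)^{d-1}/(d-1)!$ admissible sets $W$. Hence $\P[u\in I]\le(\alpha_c(\Delta)/\delta)\,\P[u\notin I]$ and $\P[u\notin I]\le\bigl(\gamma^{-1}+\sum_{d=1}^{\Delta}\binom{\Delta}{d}(\alpha_c(\Delta)/\delta)^{d-1}\bigr)\P[u\in I]$, with no case split. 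Alternatively, for $N<N_0$ both probabilities are positive by a greedy argument, since $(\Delta+1)r<N$, and so each is at least $1/|\cI_r(H)|\ge 2^{-N_0}$.

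Second, your closing remark has it backwards: $W$ must \emph{not} be chosen canonically. The preimage count $\binom{r-1}{d-1}\binom{\Delta}{d}$ is of order $N^{d-1}$ regardless. The argument works precisely because $W$ ranges over all admissible $(d-1)$-sets, so the forward count is also of order $N^{d-1}$ and only the bounded ratio $(r/\delta N)^{d-1}$ survives. That is what your bullet computation actually does.
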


\begin{lemma}[$\ell_\infty$-independence for $\mu_r$ {\cite[Theorem~8]{JMPV}}]
\label{lem:JMPV_linf_indep}
Fix $\Delta\ge 3$ and $\delta,\gamma>0$. There exists $\eta=\eta(\Delta,\delta,\gamma)>0$ such that the
following holds.
For every graph $H$ of maximum degree at most $\Delta$ and every integer $r$ satisfying
$\gamma|V(H)|\le r \le (1-\delta)\alpha_c(\Delta)|V(H)|$, and every $u\in V(H)$,
\[
\sum_{v\in V(H)}
\big|\P_{I\sim \mu_r(H)}[v\in I\mid u\in I]-\P_{I\sim \mu_r(H)}[v\in I\mid u\notin I]\big|
\ \le\ \eta.
\]
Consequently,
\[\|\mathrm{Cor}(\mu_r(H)) \|_{\mathrm{op}} \leq \eta.\]
\end{lemma}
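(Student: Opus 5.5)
The first assertion, the column-sum bound, is exactly \cite[Theorem~8]{JMPV}, and I would take it as given. The only thing to prove is the ``consequently'' part, deducing $\|\mathrm{Cor}(\mu_r(H))\|_{\mathrm{op}}\le\eta$. The plan has three steps: read the hypothesis as a bound on $\|\Psi\|_{1\to1}$, pass from $\Psi$ to $\mathrm{Cor}$ by similarity, and use symmetry of $\mathrm{Cor}$.

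\emph{Step 1: identify the hypothesis with a column-sum norm.} Work with the pushforward $\widetilde{\mu_r(H)}$ on $\{\pm1\}^{V(H)}$, so that $X_v=2\mathbf 1[v\in I]-1$. Covariances and variances both scale by the same factor $4$ under the affine change $x\mapsto 2x-\mathbf 1$. Hence $\Psi$ and $\mathrm{Cor}$ are unchanged, and we may compute with the $0/1$ indicators $Y_v=\mathbf 1[v\in I]$.

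By \cref{lem:JMPV_marginals} every vertex has $\P[u\in I]\in[b,1-b]$. So every coordinate is active, and $\mathrm{diag}(\Var(X_i))$ is invertible on all of $V(H)$.

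For $0/1$ variables one has the elementary identity
\[
\frac{\Cov(Y_v,Y_u)}{\Var(Y_u)}=\P[v\in I\mid u\in I]-\P[v\in I\mid u\notin I].
\]
Therefore $\Psi_{v,u}$ equals the quantity inside the absolute value in the lemma. The hypothesis then says that every column of $\Psi(\mu_r(H))$ has absolute sum at most $\eta$, i.e.\ $\|\Psi\|_{1\to1}\le\eta$.

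\emph{Step 2: transfer to the correlation matrix by similarity.} On active coordinates, $\mathrm{Cor}=D^{-1/2}\Psi D^{1/2}$ with $D=\mathrm{diag}(\Var(X_i))$, the inverse of the relation $\Psi=D^{1/2}\Cor D^{-1/2}$ recorded in \cref{sec:l2-indep}. Thus $\mathrm{Cor}$ and $\Psi$ are similar and have the same spectrum. The spectral radius of any matrix is bounded by any induced operator norm, in particular by $\|\cdot\|_{1\to1}$. This gives
\[
\rho(\mathrm{Cor})=\rho(\Psi)\le\|\Psi\|_{1\to1}\le\eta .
\]

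\emph{Step 3: use symmetry.} $\mathrm{Cor}$ is real symmetric, so its operator norm equals its spectral radius. Hence $\|\mathrm{Cor}(\mu_r(H))\|_{\mathrm{op}}\le\eta$.

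The only point needing care is Step 1. One must get the orientation right: the sum in the lemma runs over $v$ with the conditioned vertex $u$ fixed, so it is a column sum of $\Psi$ with the paper's convention $\Psi=\Cov\cdot D^{-1}$. One must also check that no coordinate is deterministic, which the marginal bound guarantees. With the orientation reversed one would instead control $\|\Psi\|_{\infty\to\infty}$; the conclusion would still hold, since $\Psi^{\top}$ is similar to $\mathrm{Cor}$ as well.

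Note that this yields spectral rather than $\ell_2$-independence. Converting to a bound on $\|\Psi\|_{\mathrm{op}}$ would cost the factor $\sigma^{-1}$ from \cref{eq:cor-vs-psi-intro}, with $\sigma^2\ge 4b(1-b)$.
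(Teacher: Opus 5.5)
Your proposal is correct and follows the paper's route. The column-sum bound is imported from \cite[Theorem~8]{JMPV}, and the ``consequently'' is exactly the observation in \cref{sec:l2-indep}: a bound on $\|\Psi\|_{1\to 1}$ bounds the spectral radius of the similar symmetric matrix $\mathrm{Cor}=D^{-1/2}\Psi D^{1/2}$, and hence its operator norm. You also check the right details, namely the column orientation under $\Psi=\mathrm{Cov}\,D^{-1}$ and that all coordinates are active by \cref{lem:JMPV_marginals}.
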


We use the previous two lemmas to deduce sparse $\ell_2$-independence (and consequently, entropic independence) for $\mu_r$.

\begin{lemma}[Sparse $\ell_2$-independence for $\mu_r$ from \cite{JMPV}]
\label{lem:sparse_l2_indep_muk}
Fix $\Delta\ge 3$ and $\delta,\gamma>0$. There exist constants
$c=c(\Delta,\delta,\gamma)\in(0,1)$ and $\alpha=\alpha(\Delta,\delta,\gamma)>0$ such that the following holds.

For every graph $H$ of maximum degree at most $\Delta$ and every integer $r$ satisfying
$\gamma|V(H)|\le r \le (1-\delta)\alpha_c(\Delta)|V(H)|$, the measure $\mu_r(H)$ is $\alpha$ $\ell_2$-independent restricted to $\mathrm{sign}(\mathrm{Sparse}_c)$.
\end{lemma}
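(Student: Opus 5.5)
The plan is to show that every sparse pinning of $\mu_r(H)$ is again a measure of the same type, $\mu_{r'}(H')$, with parameters still inside the range where \cref{lem:JMPV_marginals} and \cref{lem:JMPV_linf_indep} apply. Those lemmas then bound $\|\Psi\|_{\op}$ directly.

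\textbf{Step 1: structure of a sparse pinning.} Take a feasible $u\in\{-1,0,1\}^n\cap\mathrm{Sparse}_c$. Let $P=\{i:u_i=1\}$ be the vertices forced into $I$ and $Q=\{i:u_i=-1\}$ the vertices forced out, so $|P|+|Q|\le\lceil cn\rceil$. Since $u$ is feasible, $P$ is independent. Arguing as in \cref{lem:uniformity_residual}, $\mathcal{R}_u\widetilde{\mu_r(H)}$ is, on its active coordinates, the pushforward of $\mu_{r'}(H')$. Here $H'=H[V\setminus(P\cup N(P)\cup Q)]$ and $r'=r-|P|$. All other coordinates are deterministic: the pinned ones, $N(P)$, and $Q$. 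The graph $H'$ still has maximum degree at most $\Delta$. Its size satisfies $|V(H')|\ge n-(\Delta+1)\lceil cn\rceil$. Since $r'\le r$ and $|V(H')|\le n$, we also have $r'\ge r-\lceil cn\rceil$.

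\textbf{Step 2: choose $c$ so the parameters stay in range.} We need $\gamma'|V(H')|\le r'\le(1-\delta')\alpha_c(\Delta)|V(H')|$ with, say, $\gamma'=\gamma/2$ and $\delta'=\delta/2$.
\begin{itemize}
\item Lower bound: $r'\ge\gamma n-\lceil cn\rceil\ge(\gamma/2)n\ge(\gamma/2)|V(H')|$ once $c\le\gamma/4$ and $n$ is large.
\item Upper bound: $r'\le(1-\delta)\alpha_c n$, while $(1-\delta/2)\alpha_c|V(H')|\ge(1-\delta/2)\alpha_c(1-2(\Delta+1)c)n$. This exceeds $(1-\delta)\alpha_c n$ once $c$ is a small constant multiple of $\delta/\Delta$.
\end{itemize}
Small $n$, where the ceiling matters, can be absorbed. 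If $\lceil cn\rceil$ is of order 1, the graph changes by $O(\Delta)$ vertices, and the bounds adjust in the same way provided $n\ge n_0(\Delta,\delta,\gamma)$. For $n<n_0$ we bound $\|\Psi\|_{\op}$ trivially: every entry has absolute value at most $1$, so $\|\Psi\|_{\op}\le n_0$. I expect this bookkeeping, namely the choice of $c$ and the handling of the degenerate and small cases, to be the main, if routine, obstacle. The key conceptual point is that sparse pinnings only move $(r,|V|)$ by $O(cn)$.

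\textbf{Step 3: bound the influence matrix.} Apply \cref{lem:JMPV_linf_indep} and \cref{lem:JMPV_marginals} to $\mu_{r'}(H')$ with $(\delta/2,\gamma/2)$, giving constants $\eta$ and $b$.

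The entries of the influence matrix of the cube pushforward are $\Psi_{ij}=\P[i\in I\mid j\in I]-\P[i\in I\mid j\notin I]$. The $\ell_\infty$-independence bound is exactly the statement $\|\Psi\|_{1\to1}\le\eta$, where each sum is over the first index with the conditioning vertex fixed.

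For the other norm, use the identity $\Var(X_i)\Psi_{ji}=\Cov(X_i,X_j)=\Var(X_j)\Psi_{ij}$. Together with $\Var(X_i)\in[4b(1-b),1]$, which follows from the marginal bounds, this gives $|\Psi_{ij}|\le|\Psi_{ji}|/(4b(1-b))$. Hence $\|\Psi\|_{\infty\to\infty}\le\eta/(4b(1-b))$.

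By \eqref{eq:l1-linf-to-l2-intro}, $\|\Psi\|_{\op}\le\eta/\sqrt{4b(1-b)}=:\alpha$. Deterministic coordinates are excluded from the active set, so they do not affect this bound. The case $u=0$ is included as the unpinned measure.
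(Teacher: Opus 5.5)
Your proposal is correct and follows the paper's strategy. A feasible sparse pinning turns $\mu_r(H)$ into $\mu_{r'}(H')$ with $(\gamma/2)|V(H')|\le r'\le(1-\delta/2)\alpha_c(\Delta)|V(H')|$, and the JMPV marginal and $\ell_\infty$-independence lemmas then give $\|\Psi\|_{\op}\le\eta/\sqrt{4b(1-b)}$.

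The differences are minor:
\begin{itemize}
\item The paper gets a $\Delta$-free $c=\min\{\gamma/4,\delta/8\}$ by showing $(r-s)/(n-(\Delta+1)s-t)$ is decreasing in $s$, using $(\Delta+1)\alpha_c(\Delta)<1$. You instead take $c$ of order $\delta/\Delta$ with the crude bound $|V(H')|\ge n-(\Delta+1)\lceil cn\rceil$, which is fine since $c$ may depend on $\Delta$.
\item The paper passes through $\|\Cor\|_{\op}\le\eta$ and $\Psi=D^{1/2}\Cor\,D^{-1/2}$. You use detailed balance with $\|\Psi\|_{\op}\le\sqrt{\|\Psi\|_{1\to1}\|\Psi\|_{\infty\to\infty}}$, which gives the same constant.
\item You explicitly handle the ceiling in $\mathrm{Sparse}_c$ and the small-$n$ case; the paper glosses over this by writing $|\mathrm{supp}(u)|\le cn$.
\end{itemize}
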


\begin{proof}
Let $n:=|V(H)|$ and set $c:=\min\{\gamma/4,\ \delta/8\}$.
Fix any feasible pinning $u\in\{-1,0,1\}^{V(H)}$ with $|\supp(u)|\le cn$; write
$S^+:=\{i:\ u_i=+1\}$ and $S^-:=\{i:\ u_i=-1\}$.
Conditioning $\mu_r(H)$ on the event $\{S^+\subseteq I,\ S^-\cap I=\emptyset\}$ yields the uniform
distribution on independent sets of size $r':=r-|S^+|$ in the graph
\[
H' \;:=\; H\big[V(H)\setminus (S^+\cup N_H(S^+)\cup S^-)\big],
\]
whenever the event has positive probability.
In particular, $H'$ has maximum degree at most $\Delta$ and
\[
n':=|V(H')|
\ \ge\ n-(\Delta+1)|S^+|-|S^-|
\ \ge\ n-(\Delta+2)cn.
\]
Moreover, since $|S^+|\le cn$ and $c\le \gamma/4$, we have
\[
r' \;=\; r-|S^+| \;\ge\; \gamma n - cn \;\ge\; (\gamma/2)n \;\ge\; (\gamma/2)n',
\]
using $n'\le n$. For the upper bound, write $s:=|S^+|$ and $t:=|S^-|$. Since
\[
n' \;\ge\; n-(\Delta+1)s-t,
\]
we have
\[
\frac{r'}{n'}
\;=\;
\frac{r-s}{n'}
\;\le\;
\frac{r-s}{\,n-(\Delta+1)s-t\,}
\;=:\;
g_t(s).
\]
For fixed $t$, a direct computation gives
\[
g_t'(s)
=
\frac{(\Delta+1)r-n+t}{\bigl(n-(\Delta+1)s-t\bigr)^2}.
\]
Now
\[
(\Delta+1)\alpha_c(\Delta)
=
\frac{(\Delta+1)(\Delta-1)^{\Delta-1}}
{(\Delta-2)^{\Delta}+(\Delta+1)(\Delta-1)^{\Delta-1}}
<1,
\]
and also
\[
(\Delta+1)\alpha_c(\Delta)
\;\ge\;
\frac{(\Delta+1)(\Delta-1)^{\Delta-1}}
{(\Delta-1)^{\Delta}+(\Delta+1)(\Delta-1)^{\Delta-1}}
=
\frac{\Delta+1}{2\Delta}
>
\frac12.
\]
Since $r\le (1-\delta)\alpha_c(\Delta)n$ and $t\le cn\le (\delta/8)n$, it follows that
\[
(\Delta+1)r-n+t
\;\le\;
\Bigl((\Delta+1)(1-\delta)\alpha_c(\Delta)-1+\delta/8\Bigr)n
<0.
\]
Hence $g_t$ is decreasing, so its maximum occurs at $s=0$. Therefore
\[
\frac{r'}{n'}
\;\le\;
g_t(0)
=
\frac{r}{n-t}
\;\le\;
\frac{r}{(1-c)n}
\;\le\;
\frac{(1-\delta)\alpha_c(\Delta)}{1-c}
\;\le\;
(1-\delta/2)\alpha_c(\Delta),
\]
where the last inequality uses $c\le \delta/8$.

Therefore, for every such pinning $u$, the conditioned measure $\mc R_u\mu_r(H)$ is exactly
$\mu_{r'}(H')$ for some graph $H'$ with maximum degree at most $\Delta$ and with
\[
(\gamma/2)n' \le r' \le (1-\delta/2)\alpha_c(\Delta)n'.
\]
Applying \cref{lem:JMPV_linf_indep} to $(H',r')$ (with parameters $\gamma/2$ and $\delta/2$) yields
spectral independence of $\mc R_u\mu_r(H)$ with constant $\eta(\Delta,\delta/2,\gamma/2)$. Since $\mu_{r'}(H')$ has marginals lower bounded by $b(\Delta, \delta/2, \gamma/2)$ by \cref{lem:JMPV_marginals}, we have $\on{Var}_{\mu_{r'}(H')} \geq 4b(1-b)$ for all active $i$, and therefore, it follows from \cref{eq:cor-vs-psi-intro} that
\[
\|\Psi(\mc R_u\mu_r(H))\|_{\op}\ \le\ \alpha(\Delta,\delta/2,\gamma/2),
\]
which completes the proof. 
\end{proof}

\begin{proposition}[Entropic independence for $\mu_r$]
\label{prop:EI_muk}
Fix $\Delta\ge 3$ and $\delta,\gamma>0$. There exists $C=C(\Delta,\delta,\gamma)>0$ such that the following holds.
For every graph $H$ of maximum degree at most $\Delta$ and every integer $r$ satisfying
$\gamma|V(H)|\le r \le (1-\delta)\alpha_c(\Delta)|V(H)|$, the measure $\mu_r(H)$ is $C$-entropically independent.
\end{proposition}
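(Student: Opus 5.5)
The plan is to apply \cref{thm:entropic-independence} directly to $\nu:=\mu_r(H)$, viewed as a measure on $\binom{V(H)}{r}$. The two hypotheses are supplied by the preceding lemmas. Hypothesis (a), the marginal lower bound, comes from \cref{lem:JMPV_marginals}. For every $u\in V(H)$ it gives $\P_{I\sim\mu_r(H)}[u\in I]\ge b(\Delta,\delta,\gamma)$, so the dichotomy in (a) holds with $b=b(\Delta,\delta,\gamma)$; there are no zero marginals at all. Hypothesis (b), sparse $\ell_2$-independence of the pushforward $\widetilde{\nu}$ on $\{\pm1\}^{V(H)}$, is exactly \cref{lem:sparse_l2_indep_muk}. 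It supplies $c=c(\Delta,\delta,\gamma)\in(0,1)$ and $\alpha=\alpha(\Delta,\delta,\gamma)$ such that $\|\Psi(\mathcal{R}_u\widetilde{\nu})\|_{\op}\le\alpha$ for every feasible $u\in\mathrm{sign}(\mathrm{Sparse}_c)$.

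With both hypotheses in hand, \cref{thm:entropic-independence} yields that $\mu_r(H)$ is $C$-entropically independent with
\[
C=\frac{2\alpha(\Delta,\delta,\gamma)}{b(\Delta,\delta,\gamma)\,c(\Delta,\delta,\gamma)}.
\]
This depends only on $(\Delta,\delta,\gamma)$ and not on $H$ or $r$, which is the required uniformity.

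The only point needing care is bookkeeping: the objects must match up. \cref{lem:sparse_l2_indep_muk} is phrased for $\mu_r(H)$ itself, and it is understood through the embedding of \cref{subsec:slice-embed}, i.e.\ as a statement about $\widetilde{\mu_r(H)}$. The pinnings $\mathcal{R}_u$ there are conditionings on $\{\pm1\}$-coordinates, which correspond to forcing vertices in or out of $I$; this is precisely what that lemma's proof handles. One should also check that the $\mathrm{Sparse}_c$ threshold uses the same $n=|V(H)|$, and it does. I expect no genuine obstacle beyond this. The substantive work, namely that sparse pinnings preserve the density window and the $\ell_\infty$-independence inputs from \cite{JMPV}, has already been done in \cref{lem:sparse_l2_indep_muk}.
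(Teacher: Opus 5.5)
Your proposal is correct and matches the paper's proof. The paper likewise takes $b$ from \cref{lem:JMPV_marginals} and $(c,\alpha)$ from \cref{lem:sparse_l2_indep_muk}, then applies \cref{thm:entropic-independence} with slice size $k=r$, giving $C=2\alpha/(bc)$, which depends only on $(\Delta,\delta,\gamma)$.
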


\begin{proof}
Combine the marginal bound \cref{lem:JMPV_marginals} (giving $b=b(\Delta,\delta,\gamma)>0$)
with sparse $\ell_2$-independence \cref{lem:sparse_l2_indep_muk} (giving $c,\alpha$),
and apply \cref{thm:entropic-independence}.
\end{proof}

\begin{lemma}[Density window along the localization chain]
\label{lem:residual_density_window}
Fix $d\in(0,1]$ and suppose $\ell\ge dk$. Set $T:=k-\ell$, and for $t\in\{0,1,\dots,T\}$ write
$n_t:=|V(G_t)|$.
Then for every $t\in\{0,1,\dots,T\}$,
\[
d\gamma\, n_t \;\le\; k-t \;\le\; (1-\delta)\alpha_c(\Delta)\, n_t.
\]
Consequently, conditional on $G_t$, the measure $\nu_t=\mu_{k-t}(G_t)$ lies in the parameter regime of
\cref{prop:EI_muk} with lower-density parameter $d\gamma$ and upper-density slack $\delta$.
\end{lemma}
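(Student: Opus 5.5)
The plan is to verify the two inequalities directly, using only the standing hypotheses $\gamma n\le k\le (1-\delta)\alpha_c(\Delta)n$ on $(G,k)$ and the structure of the residual graph \cref{eq:residual_graph_def}. The one input I need from earlier is the inequality $(\Delta+1)\alpha_c(\Delta)<1$, established in the proof of \cref{lem:sparse_l2_indep_muk}. Throughout, $n=|V(G)|$ and $t\in\{0,\dots,T\}$ with $T=k-\ell$. Since $\ell\ge dk>0$, we have $k-t\ge \ell\ge 1$, so $\cI_{k-t}(G_t)\neq\emptyset$ and $n_t\ge 1$, and there are no degenerate cases.

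\emph{Lower bound.} The graph $G_t$ is an induced subgraph of $G$, so $n_t\le n$. Hence
\[
k-t\;\ge\; k-T=\ell\;\ge\; dk\;\ge\; d\gamma n\;\ge\; d\gamma n_t .
\]

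\emph{Upper bound.} The set $S_t\cup N_G(S_t)$ has at most $(\Delta+1)t$ vertices, so $n_t\ge n-(\Delta+1)t$. The right-hand side is positive, because $(\Delta+1)t\le(\Delta+1)k<n$. Therefore
\[
\frac{k-t}{n_t}\;\le\; g(t):=\frac{k-t}{n-(\Delta+1)t}.
\]
Treating $t$ as a real variable on $[0,T]$, the derivative of $g$ has numerator
\[
-\bigl(n-(\Delta+1)t\bigr)+(\Delta+1)(k-t)\;=\;(\Delta+1)k-n .
\]
This is at most $\bigl((\Delta+1)(1-\delta)\alpha_c(\Delta)-1\bigr)n<0$ by $(\Delta+1)\alpha_c(\Delta)<1$. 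So $g$ is decreasing, and
\[
\frac{k-t}{n_t}\;\le\; g(0)=\frac{k}{n}\;\le\;(1-\delta)\alpha_c(\Delta).
\]
This is the same monotonicity trick as in \cref{lem:sparse_l2_indep_muk}, now without the $S^-$ term.

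\emph{Consequence.} By \cref{lem:uniformity_residual}, conditional on $G_t$ we have $\nu_t=\mu_{k-t}(G_t)$, and $G_t$ has maximum degree at most $\Delta$. The two inequalities are exactly the hypotheses of \cref{prop:EI_muk} with $(H,r)=(G_t,k-t)$ and parameters $(d\gamma,\delta)$.

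The only step needing care is the upper bound. A naive estimate $n_t\ge n-(\Delta+1)t$ combined with $k-t\le k$ would lose a constant factor. The fix is to keep the $-t$ in the numerator and use the monotonicity of $g$, which is exactly where $(\Delta+1)\alpha_c(\Delta)<1$ enters.
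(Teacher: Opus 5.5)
Your proof is correct and follows the paper's argument essentially line for line. The lower bound is $k-t\ge \ell\ge dk\ge d\gamma n\ge d\gamma n_t$, and the upper bound comes from $n_t\ge n-(\Delta+1)t$ together with the monotonicity of $(k-t)/(n-(\Delta+1)t)$, which uses $(\Delta+1)\alpha_c(\Delta)<1$. Your explicit check that $n-(\Delta+1)t>0$ is a small detail the paper leaves implicit.
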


\begin{proof}
The identification $\nu_t=\mu_{k-t}(G_t)$ conditional on $G_t$ is exactly
\cref{lem:uniformity_residual}.

For the lower bound, since $n_t\le n$ and $t\le T=k-\ell$, we have
\[
k-t \;\ge\; \ell \;\ge\; dk \;\ge\; d\gamma n \;\ge\; d\gamma n_t.
\]

For the upper bound, note from \eqref{eq:residual_graph_def} that
\[
n_t \;\ge\; n-(\Delta+1)t.
\]
Therefore
\[
\frac{k-t}{n_t}
\;\le\;
\frac{k-t}{\,n-(\Delta+1)t\,}
\;=:\;
h(t).
\]
A direct computation gives
\[
h'(t)
=
\frac{(\Delta+1)k-n}{\bigl(n-(\Delta+1)t\bigr)^2}.
\]
Since $k\le (1-\delta)\alpha_c(\Delta)n$ and $(\Delta+1)\alpha_c(\Delta)<1$, we have
\[
(\Delta+1)k-n
\;\le\;
\bigl((\Delta+1)(1-\delta)\alpha_c(\Delta)-1\bigr)n
<0.
\]
Hence $h$ is decreasing on $[0,T]$, and so
\[
\frac{k-t}{n_t}
\;\le\;
h(t)
\;\le\;
h(0)
=
\frac{k}{n}
\;\le\;
(1-\delta)\alpha_c(\Delta).
\]
This proves the claim.
\end{proof}

We can now prove the entropy-conservation statement \cref{eq:JMPV16_target}.

\begin{proof}[Proof of \cref{eq:JMPV16_target}]
This is a direct consequence of the ``local-to-global'' entropy  argument from
\cite[Appendix~B]{anari2021entropic}. 
For completeness, we sketch the deduction in our notation.

Choose a constant $d=d(\Delta)>0$ such that $\ell \ge dk$, and write $T:=k-\ell$
(so $\nu_T=\nu_{k-\ell}$).
% Fix $d = \Omega_\Delta(1) $ and $\ell \geq dk$ and write $T:=k-\ell$ (so $\nu_T=\nu_{k-\ell}$).
For each $t\in\{0,1,\dots,T-1\}$, let $\mu_t$ be the $f$-tilt of $\nu_t$, i.e.
$d\mu_t/d\nu_t=f/\mathbb{E}_{\nu_t}f$.
Then $\KL(\mu_t\|\nu_t)=\Ent_{\nu_t}[f]/\mathbb{E}_{\nu_t}f$ by \eqref{eq:EntKL}.
The standard entropy decomposition for subset localization (the same telescoping device used in
\cite[Appendix~B]{anari2021entropic}) gives
\[
\mathbb{E}\!\left[\Ent_{\nu_{t+1}}[f]\mid \nu_t\right]
=
\Ent_{\nu_t}[f]\;-\;\mathbb{E}_{\nu_t}[f]\cdot \KL(p_t\|q_t),
\]
where $q_t$ is the single-vertex distribution induced by $\nu_t$ and $p_t$ is the corresponding
single-vertex distribution induced by $\mu_t$.

By \cref{lem:residual_density_window}, conditional on $G_t$ the realized measure $\nu_t$ is of the form
\[
\nu_t=\mu_{k-t}(G_t)
\]
with
\[
d\gamma\,|V(G_t)| \le k-t \le (1-\delta)\alpha_c(\Delta)\,|V(G_t)|.
\]
Therefore \cref{prop:EI_muk} applies pathwise with lower-density parameter $d\gamma$, and shows that
$\nu_t$ is $C$-entropically independent with
\[
C=C(\Delta,\delta,d\gamma).
\]
Hence
\[
\KL(p_t\|q_t)\;\le\;\frac{C}{k-t}\KL(\mu_t\|\nu_t)
=\frac{C}{k-t}\cdot \frac{\Ent_{\nu_t}[f]}{\mathbb{E}_{\nu_t}f}.
\]
Substituting yields
\[
\mathbb{E}\!\left[\Ent_{\nu_{t+1}}[f]\mid \nu_t\right]
\;\ge\;
\Ent_{\nu_t}[f]\left(1-\frac{C}{k-t}\right).
\]
Taking expectations and iterating from $t=0$ to $t=T-1$ gives
\[
\mathbb{E}[\Ent_{\nu_T}[f]]
\;\ge\;
\left(\prod_{j=\ell+1}^{k}\Bigl(1-\frac{C}{j}\Bigr)\right)\Ent_\nu[f].
\]
Since $\ell\geq dk$ and $C=C(\Delta,\delta,d\gamma)$ is independent of $n$, for all sufficiently
large $n$ we have $C/j\le 1/2$ for every $j\in\{\ell+1,\dots,k\}$; the finitely many remaining
values of $n$ can be absorbed into the final constant. Therefore
\[
\prod_{j=\ell+1}^{k}\Bigl(1-\frac{C}{j}\Bigr)
\;\ge\;
\exp\!\left(-2C\sum_{j=\ell+1}^{k}\frac1j\right)
\;\ge\;
\exp\!\bigl(-2C\log(k/\ell)\bigr)
\;\ge\;
\exp\!\bigl(-2C\log(1/d)\bigr)
\;=:\;
c_0(\Delta,\delta,\gamma)
\;>\;0.
\]
This completes the proof.

\end{proof}

\bibliographystyle{abbrv}
\bibliography{main.bib}

@inproceedings{JMPV,
  title={Optimal mixing of the down-up walk on independent sets of a given size},
  author={Jain, Vishesh and Michelen, Marcus and Pham, Huy Tuan and Vuong, Thuy-Duong},
  booktitle={2023 IEEE 64th Annual Symposium on Foundations of Computer Science (FOCS)},
  pages={1665--1681},
  year={2023},
  organization={IEEE}
}

@article{Chen2022OptimalMF,
  title={Optimal mixing for two-state anti-ferromagnetic spin systems},
  author={Xiaoyu Chen and Weiming Feng and Yitong Yin and Xinyuan Zhang},
  journal={2022 IEEE 63rd Annual Symposium on Foundations of Computer Science (FOCS)},
  year={2022},
  pages={588-599},
  url={https://api.semanticscholar.org/CorpusID:247450653}
}

@inproceedings{chen2022localization,
  title={Localization schemes: A framework for proving mixing bounds for {M}arkov chains},
  author={Chen, Yuansi and Eldan, Ronen},
  booktitle={2022 IEEE 63rd Annual Symposium on Foundations of Computer Science (FOCS)},
  pages={110--122},
  year={2022},
  organization={IEEE}
}

@article{chen2023rapid,
  title={Rapid mixing of Glauber dynamics up to uniqueness via contraction},
  author={Chen, Zongchen and Liu, Kuikui and Vigoda, Eric},
  journal={SIAM Journal on Computing},
  volume={52},
  number={1},
  pages={196--237},
  year={2023},
  publisher={SIAM}
}

@inproceedings{anari2020spectral,
  title={Spectral independence in high-dimensional expanders and applications to the hardcore model},
  author={Anari, Nima and Liu, Kuikui and Gharan, Shayan Oveis},
  booktitle={2020 IEEE 61st Annual Symposium on Foundations of Computer Science (FOCS)},
  pages={1319--1330},
  year={2020},
  organization={IEEE}
}

@article{anari2021entropic,
  title={Entropic independence {I}: Modified log-Sobolev inequalities for fractionally log-concave distributions and high-temperature ising models},
  author={Anari, Nima and Jain, Vishesh and Koehler, Frederic and Pham, Huy Tuan and Vuong, Thuy-Duong},
  journal={arXiv preprint arXiv:2106.04105},
  year={2021}
}

@article{anari2021entropic2,
  title={Entropic independence {II}: optimal sampling and concentration via restricted modified log-Sobolev inequalities},
  author={Anari, Nima and Jain, Vishesh and Koehler, Frederic and Pham, Huy Tuan and Vuong, Thuy-Duong},
  journal={arXiv preprint arXiv:2111.03247},
  year={2021}
}

@inproceedings{ChenLiuVigoda2021,
  title={Optimal mixing of {G}lauber dynamics: Entropy factorization via high-dimensional expansion},
  author={Chen, Zongchen and Liu, Kuikui and Vigoda, Eric},
  booktitle={Proceedings of the 53rd Annual ACM SIGACT Symposium on Theory of Computing (STOC)},
  pages={1537--1550},
  year={2021}
}

@article{kuchukova2025fast,
  title={Fast and Slow Mixing of the {K}awasaki Dynamics on Bounded-Degree Graphs},
  author={Kuchukova, Aiya and Pappik, Marcus and Perkins, Will and Yap, Corrine},
  journal={Random Structures \& Algorithms},
  volume={67},
  number={4},
  pages={e70038},
  year={2025},
  publisher={Wiley Online Library}
}
\end{document}